\numberwithin{equation}{section}
\newtheorem{proposition}{Proposition}[section]
\newtheorem{remark}{Remark}[section]
\theoremstyle{definition}
\DeclareFontFamily{U}{MnSymbolC}{}
\DeclareSymbolFont{MnSyC}{U}{MnSymbolC}{m}{n}
\DeclareFontShape{U}{MnSymbolC}{m}{n}{
    <-6>  MnSymbolC5
   <6-7>  MnSymbolC6
   <7-8>  MnSymbolC7
   <8-9>  MnSymbolC8
   <9-10> MnSymbolC9
  <10-12> MnSymbolC10
  <12->   MnSymbolC12}{}
\DeclareMathSymbol{\intprod}{\mathbin}{MnSyC}{'270}
\newcommand{\wt}[1]{\widetilde{#1}}
\newcommand{\wh}[1]{\widehat{#1}}
\newcommand{\mc}[1]{\mathcal{#1}}
\newcommand{\mcal}[1]{\mc{#1}}
\newcommand{\scp}[2]{{\left\langle {#1}\, , \, {#2}\right\rangle}}
\def\p{{\partial}}
\def\bm{{\mathbf{m}}}
\def\bu{{\mathbf{u}}}
\def\bw{{\mathbf{w}}}
\def\bv{{\mathbf{v}}}
\def\bx{{\mathbf{x}}}
\def\p{\partial}
\pgfplotsset{compat=1.16}
\def\bm{\mathbf{m}}
\def\bu{\mathbf{u}}
\def\bv{\mathbf{v}}
\def\bx{\mathbf{x}}
\title{Surface Wave Solutions in 1D and 2D for the Broer-Kaup-Boussinesq-Kupershmidt (BKBK) System}
\author{{\Large Darryl D. Holm$^1$\footnote{Corresponding author. Email: d.holm@imperial.ac.uk, ORCID ID: 0000-0001-6362-9912}, \, Ruiao Hu$^1$\footnote{Email: ruiao.hu15@imperial.ac.uk, ORCID ID: 0000-0002-4843-1737}\vspace{0.4cm}~~ and Hanchun Wang$^2$\footnote{Email: hanchun.wang21@imperial.ac.uk, ORCID ID: 0000-0002-9669-1322}} \\ 
1. Department of Mathematics, Imperial College London, \\ London SW7 2AZ, UK\\
2. DAMTP, University of Cambridge, \\}
\date{}
\begin{document}
\maketitle

\centerline{\large In memoriam D.J. Kaup (1939-2022) and B.A. Kupershmidt (1946-2010)}

\begin{abstract}
%The celebrated Boussinesq-Kaup-Broer-Kupershmidt system for surface wave dynamics is abbreviated here as BKBK. In his 1985 paper, Boris Kupershmidt called BKBK ``the richest integrable system known to date'' and proceeded to derive three compatible Poisson operators for its Hamiltonian formulation. 

The BKBK system is a singular perturbation of the classical shallow water equations which modifies their transport velocity to depend on wave elevation slope. This modification introduces backward diffusion terms proportional to a real parameter $\kappa$. These terms also make BKBK completely integrable as a Hamiltonian system. Remarkably, when $\kappa=i/2$ the BKBK system may be transformed into the focusing nonlinear Schr\"odinger (NLS). Thus, the BKBK system with its real parameter $\kappa$ is complementary to the traditional modulational approach for water waves.
We investigate the Lie algebraic and variational properties of the BKBK system in this paper and we study its solution behaviour in certain computational simulations of regularised versions of the 1D and 2D BKBK systems. 

%We also develop some additional theoretical methods for interpreting regularised BKBK simulations, including deriving spectral conditions for linear Lyapunov stability of equilibrium solutions.

\end{abstract}

\tableofcontents

%%%%%%%%%%%%%%%%%%%%%%%%%%%%%%%%%%%%%%
\section{Introduction}\label{sec-1}
%%%%%%%%%%%%%%%%%%%%%%%%%%%%%%%%%%%%%%%%%%%%%%%%%%%%%%%%
%{\color{blue}\large\fbox{\bf This section does what?} }
%%%%%%%%%%%%%%%%%%%%%%%%%%%%%%%%%%%%%%%%%%%%%%%%%%%%%%%%

\subsection{BKBK system in one dimension (1D)}\label{sec-BKBK-1D}
The classical shallow water equations in one dimension (1D) for fluid velocity $ u = u(x,t)$ and depth $\eta = \eta(x,t)$ on the real line $x\in \mathbb{R}$ are given by
\begin{align}\begin{split}
u_t+u u_x + \frac{1}{Fr^2}\eta_x &=0\,, 
\\
\eta_t+(u \eta)_x &=0
\,,\label{SW-system} 
\end{split}
\end{align} 
in which subscripts in $t$ and $x$ denote partial derivatives, $Fr^2=u_0^2/g\eta_0$ is the dimension-free Froude number, in which $g$ is the gravitational constant, $u_0$ is the mean velocity and $\eta_0$ is the mean depth for Boussinesq long waves in shallow water. In what follows, we will take $Fr = 1$. The dispersion relation $\omega(k)$  for linearised wave solutions of the 1D shallow water system in \eqref{SW-system} proportional to $\exp(i(kx-\omega t))$ with frequency $\omega$ and wavenumber $k$ is then given by
\begin{equation}
\omega^2 = k^2 
.
\label{disp-SW}
\end{equation}
Since the phase velocity $\omega/k=\pm 1$ of linearised waves travelling on the real axis is independent of wave number $(k)$, the classical 1D shallow water system \eqref{SW-system} is said to be \emph{dispersionless}.

Historically, the shallow water system \eqref{SW-system} has had a number of interesting dispersive modifications in both 1D and 2D, going back at least to  Boussinesq \cite{boussinesq1872theorie}. These modifications are reviewed, e.g., in Broer \cite{broer1975approximate} and Kaup \cite{kaup1975higher}, as well as in more recent compendia of references and discussions, such as \cite{cheviakov2024analytical,klein2021nonlinear,klein2025kaup}.  

In this work, we study the following integrable dispersive singular perturbation of the nonlinear long-wave systems in the Boussinesq class, 
\begin{align}\begin{split}
u_t &=-\left(u^2 / 2+\eta+\beta u_x\right)_x 
\,,\\
\eta_t &=-\left(u \eta +\alpha u_{x x}-\beta \eta_x\right)_x
\,,\label{BK-System-BAK2} 
\end{split}
\end{align} 
with arbitrary constants $\alpha$ and $\beta$. For values $\alpha=1 / 3$, $\beta=0$, the system \eqref{BK-System-BAK2}  was derived by Broer  \cite{broer1975approximate} who called it ``The oldest, simplest and most widely known set of equations" for dispersive nonlinear long wave propagation.  
Kupershmidt \cite{kupershmidt1985mathematics} reduced the number of free parameters in the dispersive nonlinear long-wave system \eqref{BK-System-BAK2} by using the invertible change of variables,
\begin{align*}
    u\to u\,,\ \quad \eta\to \eta + \gamma u_x\,, \quad \text{where} \quad \gamma := - \beta \pm \sqrt{\alpha+\beta^2}
    \,,
\end{align*}
to obtain what we call here the Broer-Kaup-Boussinesq-Kupershmidt (BKBK) integrable system,
\begin{align}\begin{split}
{u}_t &= - \left({u}^2 / 2+{\eta}+\kappa {u}_x\right)_x = - uu_x - \eta_x - \kappa u_{xx}
\,,\\
{\eta}_t &= - ({\eta}v)_x  = - (\eta u )_x + \kappa {\eta}_{xx}
\quad\hbox{with}\quad 
v:= u -\kappa (\ln {\eta})_x
\,,\\
 \kappa &= \pm \sqrt{\alpha+\beta^2}
\,.\label{BK-System-BAK3} 
\end{split}
\end{align} 
The real parameter $\kappa$ appearing in \eqref{BK-System-BAK3} may be regarded as an arbitrary constant, say $\kappa = \pm 1 / 2$, after appropriately rescaling $x$ and $t$. 
Kupershmidt \cite{kupershmidt1985mathematics} also found that the system \eqref{BK-System-BAK3} for $\kappa=-1/2$ possesses three inequivalent compatible Hamiltonian structures and that this tri-Hamiltonian structure implies complete integrability of the system \eqref{BK-System-BAK3} for the case $\kappa=-1/2$. 

However, the backwards diffusion in the motion equation of system \eqref{BK-System-BAK3} produces an ill-posed dispersion relation $\omega^2(k^2)$ for the frequency $\omega$ as a function of wavenumber $k$ of the linearised BKBK solutions, 
\begin{equation}
\omega^2(k^2) = k^2(1 - \kappa^2 k^2) 
\,.
\label{disp-noalpha}
\end{equation}
Thus, the BKBK system \eqref{BK-System-BAK3} is linearly ill-posed for higher wave numbers, $ k^2>1 / \kappa^2 $, independently of the sign of the real parameter $\kappa$. Hence, regardless of its complete integrability as a Hamiltonian system for $\kappa = -1/2$, the linear ill-posedness of the 1D BKBK system is quite challenging to simulate numerically, since the solutions are linearly unstable for \emph{either} sign of $\kappa$ at higher wave numbers $ k^2>1 / \kappa^2 $, as discussed in \cite{klein2021nonlinear,klein2025kaup}. 

\textbf{Plan of the paper.} The remainder of the present investigation will proceed as follows.
\begin{itemize}
\item 
Section \ref{sec-2} discusses the geometric properties of the BKBK system in 1D. We present the various existing Hamiltonian structures of the 1D BKBK system \eqref{BK-System-BAK3} and its Euler--Poincar\'e derivation using symmetry-reduced variational principles \cite{HoMaRa1998a}. We investigate the Lyapunov stability of 1D BKBK system using the energy--Casimir approach of \cite{holm1985nonlinear} and connect the 1D BKBK system to the focusing nonlinear Schr\"odinger equation by changing the BKBK dispersion parameter $\kappa$ from real to imaginary.
\item Section \ref{sec-3} derives Euler--Poincar\'e formulation and the Hamiltonian formulation of the BKBK system in two spatial dimensions, as well as its Lie--Poisson bracket, Hamiltonian structures and conservation of potential vorticity (PV). 

\item Section \ref{sec-4} derives equilibrium conditions for the BKBK system in two dimensions following the stability results of \cite{holm1985nonlinear}. This is done by considering critical points of the sum $h_\Phi(\mathbf{u},\eta) = h_\Phi(\mathbf{u},\eta) + C_\Phi$ of the Hamiltonian found in section \ref{sec-3} plus its Casimir constants of motion, $C_\Phi$ in \eqref{DSW-Casimir}. By demanding positivity the second variation of $H_C$ we then derive spectral conditions for linear Lyapunov stability of the corresponding energy-Casimir class of equilibrium solutions of the 2D BKBK system.

\item Section \ref{sec-5} provides computational simulations of the solution behaviour for the BKBK system in both 1D and 2D. Different computational methods are employed for numerically regularising the BKBK systems in 1D and 2D. In 1D, we use a $4^{th}$-order dissipation reminiscent of the Kuramoto--Sivashinsky equation for stability. In 2D, we use a Hamiltonian regularisation where nonlinear dispersions are imposed by introducing an modified Hamiltonian that introduce energy costs for large wave height gradients.

\item Section \ref{sec-6} provides a summary conclusion and outlook for future research.
\end{itemize}
 
%%%%%%%%%%%%%%%%%%%%%%%%%%%%%%%%%%%%%%
\section{BKBK system in 1D} \label{sec-2}
%%%%%%%%%%%%%%%%%%%%%%%%%%%%%%%%%%%%%%%%%%%%%%%%%%%%%%%%
% This section reframes the 1D BKBK system \cite{kupershmidt1985mathematics} as a Lie--Poisson Hamiltonian system and then explores its relation with modulation instability, by using the inverse Madelung transform to identify the 1D BKBK system with the focusing nonlinear Schr\"odinger equation when the real parameter $\kappa$ appearing in \eqref{BK-System-BAK3} takes the imaginary value $\kappa=i/2$.
%%%%%%%%%%%%%%%%%%%%%%%%%%%%%%%%%%%%%%%%%%%%%%%%%%%%%%%%

The 1D dispersive BKBK system in \eqref{BK-System-BAK3} possesses three inequivalent compatible Hamiltonian formulations \cite{kaup1975higher,kupershmidt1985mathematics,clamondwater}. One of these is the well-known constant-coefficient Hamiltonian structure for the fluid velocity $u$ and depth $\eta$ that is defined through the Poisson bracket $\{\,\cdot\,,\, \cdot\,\}_c$. For arbitrary functions $F, G$ of the variables $(u,\eta)$, $\{\,\cdot\,,\, \cdot\,\}_c$ is defined by 
\begin{align}
    \{F,G\}_c(u,\eta) = \int_{\mathbb{R}}\left[ \frac{\delta G}{\delta \eta} \p_x\frac{\delta F}{\delta u} - \frac{\delta F}{\delta \eta}\p_x \frac{\delta G}{\delta u} \right]\,dx  \,.\label{CPB-def} 
\end{align}
Being a bilinear, antisymmetric, constant-coefficient differential operator, the Poisson bracket $\{\,\cdot\,,\, \cdot\,\}_c$ 
also satisfies the Jacobi identity,
\begin{equation}
\{A\,,\,\{B\,,\,D\}_c\}_c + \{B\,,\,\{D\,,\,A\}_c\}_c + \{D\,,\,\{A\,,\,B\}_c\}_c   = 0
\,,\label{Jacobi Id} 
\end{equation}
for arbitrary functionals $A, B, D$ of the variables $u$ and $\eta$. The time evolution of an arbitrary functional $F$ of $(u,\eta)$ is generated by the Hamiltonian functional $H$ on the same variables through the $\{\cdot\,,\,\cdot\}_c$ by
\begin{align}
    \frac{d}{dt}F(u,\eta) = \{F, H\}_c(u,\eta)\,,
\end{align}
such that the evolution of $(u,\eta)$ can be expressed in Poisson operator form as
\begin{equation}
\p_t 
\begin{bmatrix}
u \\ \eta
\end{bmatrix}
= -
\begin{bmatrix}
0 & \p_x \\
\p_x & 0 
\end{bmatrix}
\begin{bmatrix}
\delta {H}/\delta u 
\\  
\delta {H}/\delta \eta 
\end{bmatrix}
,\label{BKBK-Poisson1} 
\end{equation}
with 1D BKBK Hamiltonian $ H (u,\eta)$ and variations $\frac{\delta H}{\delta u}$ and $\frac{\delta H}{\delta \eta}$ given by
\begin{align}
\begin{split}
    H (u,\eta) &:= \int_{\mathbb{R}} \tfrac12 \eta u^2 - \kappa u \p_x \eta + \tfrac{1}{2}\eta^2\, dx \quad \text{such that}\\
\frac{\delta H}{\delta u} &= \eta v \quad\hbox{with}\quad v :=u - \kappa \p_x\ln \eta
\,,\quad 
\frac{\delta H}{\delta \eta} = \frac{1}{2}u^2 + \frac{\eta}{2} + \kappa \p_x u
\,.
\end{split}
\label{BKBK-Ham1}
\end{align} 
The operation of the matrix Poisson operator in \eqref{BKBK-Poisson1} on the variational derivatives of $H$ yields the 1D BKBK system \eqref{BK-System-BAK3}. From the Hamiltonian \eqref{BKBK-Ham1}, we see that the sign-indefinite terms multiplying $\kappa$ correspond to the terms that generate the singular dispersive perturbations of the classical shallow water equations appearing in the BKBK system \eqref{BK-System-BAK3}. 

\begin{remark}
Equilibrium solutions of the BKBK system \eqref{BK-System-BAK3} satisfy,
\begin{align}
\delta H_C = 0
\,,\quad \hbox{for} \quad H_C := {H} (u,\eta) +  C \int  u\,dx = 0
\,,\label{Eqm-conds1}
\end{align}
for an arbitrary constant speed $C$.
The second variation $\delta^2 {H}_C (u,\eta)$ is the Hamiltonian for the linearised flow in an infinitesimal neighbourhood of an equilibrium solution satisfying $\delta H_C = 0$ \cite{holm1985nonlinear}. Consequently, the Lyapunov stability of these BKBK equilibria may be tested by considering whether the following quadratic form $\delta^2 {H}_C (u,\eta)$ for the BKBK system is definite in sign when evaluated at an equilibrium solution $(u_e(x),\eta_e(x))$. 
\begin{align}
\delta^2 {H}_C (u,\eta)\Big|_{(u_e(x),\eta_e(x))} 
&= \int_{\mathbb{R}} 
\begin{bmatrix}
\delta u \\ \delta \eta
\end{bmatrix}^T
\begin{bmatrix}
\eta_e & u_e 
\\
u_e & 1  
\end{bmatrix}
\begin{bmatrix}
\delta u \\ \delta \eta 
\end{bmatrix}dx
+ 2 \kappa \int_{\mathbb{R}}  \delta u_x \delta \eta \,dx
\\&= 
\int_{\mathbb{R}} 
\begin{bmatrix}
\delta u \\ \delta \eta
\end{bmatrix}^T
\begin{bmatrix}
\eta_e & u_e -  \kappa\p_x
\\
u_e + \kappa\p_x  & 1  
\end{bmatrix}
\begin{bmatrix}
\delta u \\ \delta \eta 
\end{bmatrix}dx
\,.\label{2nd-var-conds}
\end{align}
For the case of 1D shallow water dynamics arising for $\kappa=0$, definiteness in sign of $\delta^2 H_C (\mathbf{u}_e,\eta_e) $ provides sufficient
conditions for linear Lyapunov stability of shallow water equilibrium solutions that are obtained from setting $\delta H_C (\mathbf{u}_e,\eta_e) =0$ with $\kappa =0$. This was shown for 2D shallow water dynamics in \cite{holm1983nonlinear,holm1985nonlinear}. However, for non-zero $\kappa$, the linear stability of perturbed 1D BKBK equilibrium solutions in the class $\delta H_C ({u}_e,\eta_e) =0$ requires the additional condition that the spectrum of the symmetric elliptic operator in \eqref{2nd-var-conds} must also be positive definite. 
However, the symbol of the elliptic operator in \eqref{2nd-var-conds} is of the form
\begin{align}
\sigma(k^2) = {\eta}_e(x) - \big( {u}_e^2(x) + \kappa^2\,k^2 \big) 
\,,
\end{align}
which cannot remain positive at arbitrarily high wavenumber $(k)$ for non-zero $\kappa$. 
Hence, the linear Lyapunov stability condition of positivity of the 2nd variation cannot be enforced at arbitrarily high wavenumber, $(k)$. Once high wave numbers were introduced in perturbing 1D BKBK equilibrium solutions such as its travelling wave solution, the equilibrium would no longer be linearly Lyapunov stable and the nonlinear terms would generate even higher wave numbers. 
\end{remark}
\paragraph{Connection to Nonlinear Schr\"odinger equation}\label{rem:NLS}
    The 1D BKBK system \eqref{BK-System-BAK3} can be expressed in terms of the modified transport velocity $v$. In this case, we have 
\begin{align}
    \begin{split}
        v_t = - \left(\frac{v^2}{2} - \frac{\kappa^2}{2}\left(\frac{\eta_x}{\eta}\right)^2 + \kappa^2\frac{\p_x^2\eta}{\eta}  
        + \eta \right)_x\,,\quad \eta_t = -\,(\eta v)_x\,.
    \end{split}\label{1d BKBK v eta}
\end{align}
System \eqref{1d BKBK v eta} can be expressed using the constant-coefficient Poisson bracket \eqref{BKBK-Poisson1} in combination with the following modified Hamiltonian $\mathfrak{h}$,
\begin{align}
    \mathfrak{h}(v,\eta) = \int_{\mathbb{R}} \frac{1}{2}\eta v^2 - \frac{\kappa^2}{2\eta}(\p_x\eta)^2  +\frac{1}{2}\eta^2 \,dx = \int_{\mathbb{R}} \frac{1}{2}\eta v^2 - \kappa^2(\p_x\sqrt{\eta})^2 + \frac{1}{2}\eta^2 \,dx\,.
\label{Ham-v}
\end{align}
One may notice that the term $(\p_x\sqrt{\eta})^2$ in the second expression for the Hamiltonian here is in the form of the Fisher--Rao metric of $\eta$. It has not escaped our attention that the Hamiltonian $\mathfrak{h}$ in \eqref{Ham-v} is similar to the Hamiltonian of the Nonlinear Schr\"odinger (NLS) equation, when expressed in the Madelung variables $(\rho, \phi)$ related to the complex wave function $\psi$ by $\psi = \sqrt{\eta}\exp(i\phi)$ and $v = \p_x \phi$ \cite{M1927}. 

Choosing $\kappa = \frac{i}{2}$ after transforming variables $\eta v={\rm Im}(\psi^*\p_x \psi)$ and $\eta=\|\psi\|^2$ in the Hamiltonian \eqref{Ham-v} yields the Hamiltonian for the focusing NLS equation with canonical complex variables $(\psi,\psi^*)$ and Planck constant $\hbar$ set equal to unity. 

Inserting $\kappa = \frac{i}{2}$ into the dispersion relation $\omega^2(k^2)$ in \eqref{disp-noalpha}, the dependence of frequency $\omega$ on wavenumber $k$ of the linearised solutions of the BKBK system is modified to
\begin{equation}
    \omega^2(k^2) = k^2\left(1 + \tfrac14 k^2\right)  \,. 
\end{equation}
This is a notable observation, since nonlinear modulation instability of water waves has been traditionally using with the focusing NLS equation, \cite{peregrine1983water}. Given this observation, studies of instability of water waves using the 1D BKBK system with real-valued $\kappa$ may be regarded as being complementary to the study of modulation instability of water waves using the NLS equation.

% Linear stability analysis seems to suggest
% \begin{align}
%     \begin{pmatrix}
%         i\omega & -ik\left(\kappa^2k^2/\eta_0+1\right) \\
%         -ik\eta_0 & i\omega
%     \end{pmatrix}
%     \begin{pmatrix}
%         \wh{\eta}\\ \wh{v}
%     \end{pmatrix} = 0
% \end{align}
% so that one have $w^2 = k^2\eta_0 + \kappa^2k^4$ for dispersion relation.

% where the state variables $(u,\eta)$ are identified as the scalar coefficients of $\wt{u} = u\,dx \in \Lambda^1(\mathbb{R})$ and $\wt{\eta} = \eta\,dx \in \operatorname{Den}(\mathbb{R})$. 
% % In this section, we will use the same notation $(u, \eta)$ for the $1$-forms and their scalar coefficients. When vector calculus notation are involved, $(u,\eta)$ are used to mean the scalar coefficients; when geometric operators such as the exterior derivatives and Lie derivatives are used, we use $(u,\eta)$ to denote the $1$-forms. Using this identification, the meaning of $u$ and $\eta$ should be clear from the context and cause no ambiguity.

% $\{\,\cdot\,,\, \cdot\,\}_c: C(\mcal{X})\times C(\mcal{X}) \rightarrow C(\mcal{X})$, $ \mcal{X} := \Lambda^1(\mathbb{R})\times \operatorname{Den}(\mathbb{R})$ associated with the constant-coefficient Hamiltonian structure is

\paragraph{Lie--Poisson bracket dynamics.}
Via an invertible change of variables $(u,\eta) \to (m:=\eta u,\eta)$ for $u\ne0$, the 1D BKBK system \eqref{BK-System-BAK3} may be cast into the following equivalent form
\begin{align}
    \begin{split}
        &\p_t m + \p_x \left(m v\right) + m\p_x v + \eta \p_x \mcal{B} = 0\,,\\
        &\p_t \eta + \p_x(v\eta) = 0\,, \label{BKBK-1D m eta}
    \end{split}
\end{align}
where the quantities $v$ and $\mcal{B}$ are defined by variations of the Hamiltonian functional \eqref{BKBK-Ham1} expressed in the new variables $(m,\eta)$, as
\begin{align}
\begin{split}
h (m,\eta) &:= 
\int_{\mathbb{R}} \frac{m^2}{2\eta} - \kappa m\, \p_x (\ln\eta)
+ \tfrac12\eta^2\, dx
\,, \quad\text{such that}\,,\\
\label{BKBK-Ham2}
v &:= \frac{\delta h}{\delta m} = u - \kappa \p_x \ln \eta\,,\quad \mcal{B} := \frac{\delta h}{\delta \eta} = -\frac{|u|^2}{2} + \frac{\kappa}{\eta}\p_x(\eta u) + \eta\,.
\end{split}
\end{align}
In fact, system \eqref{BKBK-1D m eta} takes the form of a Lie--Poisson system on the semidirect product Lie co-algebra $\mathfrak{s}^* = \mathfrak{X}^*(\mathbb{R}) \ltimes \operatorname{Den}(\mathbb{R})$ \cite{holm1983poisson,HoMaRa1998a,holm2025geometric}. In the Lie--Poisson formulation, one identifies $m$ as the scalar coefficient part of the momentum $1$-form density $\wt{m} := m\,dx \otimes dx \in \mathfrak{X}^*(\mathbb{R})$ and $\eta$ as the scalar coefficient part of the density $\wt{\eta} = \eta\,dx \in \operatorname{Den}(\mathbb{R})$. On the Lie co-algebra of the Lie algebra $\mathfrak{s} = \mathfrak{X}(\mathbb{R}) \ltimes \operatorname{Den}(\mathbb{R})$, the Lie--Poisson bracket $\{\cdot, \cdot\}$ may be defined in terms of the basis coefficients, for arbitrary $f, g$ functionals of $(m,\eta)$, as
\begin{align}
\begin{split}
    \{ f,g \} (m,\eta) &:=
\int_{\mathbb{R}}
m\left(\frac{\delta f}{\delta m}\p_x\frac{\delta g}{\delta m} - \frac{\delta g}{\delta m}\p_x\frac{\delta f}{\delta m}\right) + \eta\left(\frac{\delta f}{\delta m}\p_x\frac{\delta g}{\delta \eta} - \frac{\delta g}{\delta m}\p_x\frac{\delta f}{\delta \eta}\right)\,dx 
% &= \scp{m}{\left[\frac{\delta f}{\delta m}, \frac{\delta g}{\delta m}\right]}_{\mathfrak{X}(\mathbb{R})} + \scp{\eta}{\mcal{L}_{\frac{\delta f}{\delta m}}\frac{\delta g}{\delta \eta} - \mcal{L}_{\frac{\delta g}{\delta m}}\frac{\delta f}{\delta \eta}}_{\mcal{F}(\mathbb{R})}
\,.\label{LPB-def}
\end{split}
\end{align}
% where $[\,\cdot\, ,\,\cdot\, ]: \mathfrak{X}(\mathbb{R})\times \mathfrak{X}(\mathbb{R})\to \mathfrak{X}(\mathbb{R})$ is the vector field commutator and $\mcal{L}: \mathfrak{X}(\mathbb{R})\times \mcal{F}(\mathbb{R}) \rightarrow \mcal{F}(\mathbb{R})$ is the Lie derivative operator. The angle brackets, $\scp{\,\cdot\, }{\,\cdot\, }_{\mathfrak{X}(\mathbb{R})}$ and $\scp{\,\cdot\, }{\,\cdot\, }_{\mathfrak{X}(\mathbb{R})}$, denote the geometric duality pairing of $\mathfrak{X}(\mathbb{R})$ and $\mcal{F}(\mathbb{R})$, respectively.
% Introducing $\ad^* : \mathfrak{X}(\mathbb{R})\times \mathfrak{X}^*(\mathbb{R})\to \mathfrak{X}^*(\mathbb{R})$ as the dual operator to $[\cdot, \cdot]$ and the dual operator $\diamond : \operatorname{Den}(\mathbb{R})\times \mcal{F}(\mathbb{R}) \rightarrow \mathfrak{X}^*(\mathbb{R})$ defined by for all $(u, \eta, \phi)\in \mathfrak{X}(\mathbb{R}) \times \mcal{F}(\mathbb{R})\times \operatorname{Den}(\mathbb{R})$, 
% \begin{align}
%     \scp{-\phi \diamond \eta}{u}_{\mathfrak{X}(\mathbb{R})} = \scp{\phi}{\mcal{L}_u \eta}_{\mcal{F}(\mathbb{R})}\,,
% \end{align}
Thus, the 1D BKBK system \eqref{BKBK-1D m eta} can be expressed in Lie--Poisson operator form as
\begin{equation}
\p_t 
\begin{bmatrix}
m \\ \eta
\end{bmatrix}
= -
\begin{bmatrix}	
\p_x m + m\p_x  &	\eta\p_x	
\\
\p_x \eta	&	0
\end{bmatrix}
\begin{bmatrix}
\delta h/\delta m 
\\  
\delta h/\delta \eta
\end{bmatrix}
=- 
\begin{bmatrix}
\p_x \left(m \frac{\delta h}{\delta m}\right) + m\p_x \frac{\delta h}{\delta m} + \eta\p_x\frac{\delta h}{\delta \eta}
\\  
\p_x \big(\eta\tfrac{\delta h}{\delta m} \big)
\end{bmatrix}
.\label{BKBK-Poisson2} 
\end{equation}
Here, the spatial partial derivatives appearing in the Poisson matrix of \eqref{BKBK-Poisson2} each act on all terms to their right arising \emph{after} after performing the matrix vector multiplication. For background in the derivation and applications of semidirect-product Lie--Poisson brackets for continuum models, see, e.g., \cite{holm1983poisson,HoMaRa1998a,holm2025geometric}. 

Having expressed the BKBK system in the Lie--Poisson form in equation \eqref{BKBK-Poisson2}, one sees that the `dispersive term' proportional to the constant $\kappa$, modifies the transport velocity ($\delta h/ \delta m$) from $u$ in the classical shallow water case to $\delta h/ \delta m = v:= u - \kappa \p_x\ln \eta$ in the dynamics for both depth $\eta$ and momentum density $m$.

\paragraph{The equivalent Euler--Poincar\'e derivation.}
The Lie--Poisson Hamiltonian form of the BKBK system has an equivalent Euler--Poincar\'e Lagrangian form arising from the corresponding Hamilton variational principle. Namely, the 1D BKBK system \eqref{BK-System-BAK3} can be derived from an Euler--Poincar\'e variational principle \cite{HoMaRa1998a} with advected quantity $\eta$ given by, 
\begin{align}\label{1D BKBK lag}
    0 = \delta S = \delta \int_{t_0}^{t_1}\ell(v, \eta)\,dt = \delta \int_{t_0}^{t_1} \int_{\mathbb{R}} \frac{1}{2}\eta |v + \kappa \p_x \ln\eta|^2 - \frac{1}{2}\eta^2\,dx\,dt\,,
\end{align}
with constrained variations given by
\begin{align}\label{1D BKBK vars}
\delta v = \p_t w - w\partial_x v + v\partial_x w
\quad\hbox{and}\quad
\delta \eta = - \p_x(w \eta)
\,.
\end{align}
Here, $w$ is an arbitrary vector field that vanishes on the temporal endpoints and obeys appropriate decay conditions as $|x|\rightarrow \infty$. After taking these constrained variations and integrating by parts in space and time the Euler--Poincar\'e formulation with Lagrangian \eqref{1D BKBK lag} recovers the 1D BKBK system in \eqref{BK-System-BAK3}. For the intrinsic expression of the Euler--Poincar\'e and Lie--Poisson formulation of the BKBK system in terms of actions of the diffeomorphism group, see, e.g., \cite{HoMaRa1998a}.

A virtue of having derived the 1D BKBK system \eqref{BK-System-BAK3} via an Euler--Poincar\'e variational principle with advected quantities is that the definition of its Lagrangian extends easily to any number of spatial dimensions. 
In particular, one may derive the 2D BKBK system from the same variational principle, as shown in section \ref{sec-4}. As we will see, equivalence of the Euler--Poincar\'e variational principle and the Lie--Poisson Hamiltonian structure implies that the 2D BKBK system also satisfies the semidirect-product Lie--Poisson bracket formulation that appears naturally in the ideal dynamics of all continuum models. Moreover, the Madelung transform of the 1D BKBK system for $\kappa=i/2$ to the focusing NLS equation also applies for the 2D BKBK system, but only in the case of potential flow, 
 \begin{align}
 \bu=\nabla \phi = \bv + \kappa \nabla \ln\eta 
 \,.
 \label{2D-potentflow}
 \end{align}

\section{BKBK system in 2D}\label{sec-3}
%%%%%%%%%%%%%%%%%%%%%%%%%%%%%%%%%%%%%%%%%%%%%%%%%%%%%%%%
This section extends the geometric structures of the 1D BKBK system to 2D. In particular, the Euler--Poincar\'e derivations of the 2D BKBK system, as well as its Hamiltonian structures and their implications such as the existence of potential vorticity and its associated conservation laws as presented.
%%%%%%%%%%%%%%%%%%%%%%%%%%%%%%%%%%%%%%%%%%%%%%%%%%%%%%%%

\paragraph{Variational principles of 2D BKBK.}
To extend the 1D BKBK system \eqref{BK-System-BAK3} defined on the real line to the 2D plane, we denote by $\bu = \bu(\bx,t)$ for the fluid velocity and $\eta = \eta(\bx,t)$ for the fluid depth where $\bx \in \mathbb{R}^2$. Geometrically, we have the fluid velocity vector field $u = \bu\cdot \,\nabla \in\mathfrak{X}(\mathbb{R}^2)$ and the fluid depth as the density $\wt{\eta} = \eta\, d^2x \in \operatorname{Den}(\mathbb{R}^2)$. The BKBK modified transport velocity vector field is defined to be,
\begin{align}
\bv := \bu - \kappa \nabla \ln\eta 
\,.\label{DSW-transport-vel}
\end{align}
%
%\footnote{Here, to simplify the notation, the operation $\nabla$ is treated as having either contravariant or covariant indices, depending upon its usage.}
We postulate that the dynamics of $\bu$ and $\eta$ are obtained from the following Euler--Poincar\'e variational principle with advected quantity,  
\begin{align}\label{2D BKBK lag}
    0 = \delta S[\bv,\eta] = \delta \int_{t_0}^{t_1}\ell(\bv, \eta)\,dt = \delta \int_{t_0}^{t_1} \int_{\mathbb{R}^2} \frac{1}{2}\eta |\bv + \kappa \nabla \ln\eta|^2 - \frac{1}{2}\eta^2\,d^2x\,dt\,,
\end{align}
with constrained variations $\delta \bv = \p_t \bw - [\bw, \bv]$ and $\delta \eta = - \nabla\cdot(\bw \eta)$, where $\bw$ is an arbitrary vector that vanish on temporal boundaries and appropriate decay conditions as $x\rightarrow \infty$. Here, $[\cdot,\cdot]$ is the Jacobi--Lie bracket of vector fields, $[\bw, \bv] = \bw \cdot\nabla \bv - \bv\cdot \nabla \bw$.
Applying the Euler--Poincar\'e theorem \cite{HoMaRa1998a}, the resulting Euler--Poincar\'e equation and the advection equation of $\eta$ form the 2D BKBK system
\begin{align}
    \begin{split}
        &\p_t \bu + \bv\cdot \nabla \bu + u_i\nabla v^i = \nabla {\cal B}\,,\\
        &\p_t \eta + \operatorname{div}(\eta \bv ) = 0\,, \\
       & \text{where} \quad  \bv = \bu - \kappa \nabla \ln\eta\,, \quad {\cal B}:= \left(\frac{1}{2}|u|^2 - \frac{\kappa}{\eta}\operatorname{div}\left(\eta \bu\right) - \eta\right)
    \,.\end{split}\label{2D BKBK}
\end{align}
In the 2D BKBK system \eqref{2D BKBK}, one sees that the term in the Lagrangian functional \eqref{2D BKBK lag} proportional to $\kappa$ plays two roles. First, it serves to add the standard dispersion term for shallow water waves, $-\kappa \eta^{-1}\nabla \operatorname{div}(\eta \bu )$. Second, it serves to enhance the shallow water fluid transport velocity $\mathbf{u}$ by an added transport velocity, $- \kappa \nabla \ln\eta$, which involves the gradient of the wave elevation.

By a direct calculation, the 2D BKBK system in \eqref{2D BKBK} implies the following Kelvin theorem 
\begin{align}
\frac{d}{dt}\oint_{c(v)}\mathbf{u}\cdot d\bx 
= 
\oint_{c(v)}\big(\p_t \mathbf{u} - \mathbf{v} \times {\rm curl} \mathbf{u} \big) \cdot d\bx 
= 
\oint_{c(v)} \nabla \tilde{\cal B} \cdot d\bx = 0
\,,\label{DSW-Kelvin-fluid}
\end{align}
where $c(v)$ is a material loop moving with the BKBK modified transport velocity vector field $\bv$. 
\begin{remark}
The definition $\bv = \bu - \kappa \nabla \ln \eta$ implies a constrained variation for $\bu$ arising  from the constrained variations of $\bv$ and $\eta$. In particular, 
\begin{align}
\begin{split}
    \delta \bu = \delta \bv + \kappa \nabla \left(\frac{1}{\eta}\delta \eta\right) &= \p_t \bw - [\bw, \bv] - \kappa\nabla\left(\frac{1}{\eta} \nabla\cdot(\bw\eta)\right)\\
    & = \p_t \bw - \ [\bw, \bu - \kappa \nabla \ln \eta] - \kappa\nabla\left(\frac{1}{\eta} \nabla\cdot(\bw\eta)\right)\,.
\end{split}
\end{align}
Expressing the action principle \eqref{2D BKBK lag} in terms of $(\bu,\eta)$ and imposing the constrained variations for $\eta$ and $\bu$ as presented above yield the 2D BKBK system \eqref{2D BKBK}.
\end{remark}
\begin{remark}
    One may express the 2D BKBK system \eqref{2D BKBK} in the variables $(\bv ,\eta)$ as
    \begin{align}
        \begin{split}
            &\p_t \bv + \bv\cdot \nabla \bv + v_i \nabla v^i = \nabla \wh{\mcal{B}}\,, \quad \text{where} \quad \wh{\cal{B}} = \frac{|\bv|^2}{2} + \frac{\kappa^2}{2}|\nabla \ln \eta|^2 - \frac{\kappa^2}{\eta}\triangle \eta - \eta\,, \\
            & \p_t \eta + \nabla\cdot\left(\bv \eta\right) = 0\,.
        \end{split}
    \end{align}
    The Kelvin circulation dynamics for the transport velocity $\bv$ has the same form as for the fluid velocity $\bu$. Let $c(v)$ be the material loop moving with the vector field $\bv$,
    \begin{align}
    \frac{d}{dt}\oint_{c(v)}\mathbf{v}\cdot d\bx 
    = 
    \oint_{c(v)} \nabla \wh{\cal B} \cdot d\bx = 0\,.
    \end{align}
    % As we shall see in the stability analysis, the BKBK system in terms of $(\bv,\eta)$ variables have the same stability properties as the classical shallow water equations.
    % % \todo[inline]{RH: Darryl, the Bernoulli function $\wh{\mcal{B}}$ closely resembles the Bernoulli function of nonlinear Schr\"oedinger's equation in Madelung variables.
    % % What we can say about this observation?\\
    % % {\color{red}DH: I guess we can write out the formulas demonstrating your observation, recall the Hasimoto transformation and say it suggests further investigation, for later studies.}}
\end{remark}

\paragraph{Hamiltonian structures for 2D BKBK.}
In generalising the semidirect-product Lie--Poisson bracket in equation \eqref{LPB-def} from one spatial dimension to two spatial dimensions, the Lie--Poisson Hamiltonian formulation of the 1D BKBK system in \eqref{BKBK-Poisson2} generalises to the following 2D BKBK system in $\mathbb{R}^2$ index notation: 
\begin{equation}
\p_t 
\begin{bmatrix}
m_i \\ \eta
\end{bmatrix}
= -
\begin{bmatrix}	
\p_j m_i + m_j\p_i  &	\eta\p_i	
\\
\p_j \eta	&	0
\end{bmatrix}
\begin{bmatrix}
\delta h/\delta m_j
\\  
\delta h/\delta \eta
\end{bmatrix}
, \label{BKBK-Lie-Poisson2D} 
\end{equation}
with $i,j=1,2$ and standard Einstein summation notation. Its Hamiltonian functional is written in the variables $(\mathbf{m},\eta)$ as 
\begin{align}
 \begin{split}
    h (\bm,\eta) &:=  \int_{\mathbb{R}} \frac{|\bm|^2}{2\eta} - \kappa \bm\cdot \nabla (\ln\eta) + \frac12\eta^2\, d^2x \,,\quad \text{such that}\\
    \frac{\delta h}{\delta \bm} &= \bu - \kappa \nabla \ln\eta = \bv\,,\quad \frac{\delta h}{\delta \eta} = - \frac{|\bu|^2}{2} + \eta + \frac{\kappa}{\eta}\nabla\cdot(\eta \bu) = - {\cal B} \,.\label{BKBK-Ham2D}
    \end{split}
\end{align}
One may directly calculate the inverse transformation of variables $(\bm,\eta)\to (\bu=\bm/\eta,\eta)$ for the 2D version of the Lie--Poisson structure in \eqref{BKBK-Lie-Poisson2D} and thereby determine the equations of motion in the standard shallow water variables $(\bu,\eta)$. In this case, the Poisson matrix operated may be computed as follows
\begin{align}
    \begin{bmatrix}
\delta_{ki}/\eta & - m_k /\eta^2		
\\							
0	&	1		
\end{bmatrix}
\begin{bmatrix}	
\p_j m_i + m_j\p_i  &	\eta\p_i	
\\
\p_j \eta	&	0
\end{bmatrix}
\begin{bmatrix}
\delta_{jl}/\eta	&	0		
\\							
- m_l / \eta^2   &	1		
\end{bmatrix}
=
\begin{bmatrix}
(u_{k,l} - u_{l,k} )/\eta &	\p_k	
\\							
\p_l & 0		
\end{bmatrix}
\,. \label{eq:PV bracket}
\end{align}
Let $q \in \mcal{F}(\mathbb{R}^2)$ be the scalar potential vorticity (PV) defined by 
\begin{align}
    q:= \frac{1}{\eta} \mathbf{\wh{z}}\cdot\operatorname{curl}\bu = \frac{1}{\eta}(\p_x u_2 - \p_y u_1) \,.
\label{DSW-def-PV} 
\end{align}
For notational convenience, the 2D velocity $\bu$ in \eqref{DSW-def-PV} is written as a three dimensional vector $\bu = (u_1,u_2,0)$ and the function $q$ for PV is written as a vertical component $\mathbf{q} := q \mathbf{\wh{z}}$, so that $\mathbf{q}\times \bu$ rotates a horizontal vector $\bu$ clockwise by $\pi/2$. 
The transformed Poisson matrix in \eqref{eq:PV bracket} yields the following Poisson bracket
\begin{align}
    \{F,H\}(\bu,\eta) = \int_{\mathbb{R}}\left[ -\frac{\delta F}{\delta \bu}\cdot q \frac{\delta H}{\delta \bu}^T + \frac{\delta G}{\delta \eta} \nabla \cdot \frac{\delta F}{\delta \bu} - \frac{\delta F}{\delta \eta} \nabla \cdot \frac{\delta G}{\delta \bu} \right]\,dx  
    \,.\label{PB-def} 
\end{align}
Here, the superscript $(\cdot)^T$ is defined for two dimensional vectors as $(u_1, u_2)^T := (-u_2, u_1)$.
Consequently, the 2D BKBK system \eqref{2D BKBK} may be expressed with a Poisson operator \eqref{eq:PV bracket} involving the potential vorticity $q$ in the vector form $\mathbf{q} := q \mathbf{\wh{z}}$ as 
\begin{align}
\begin{split}
\p_t 
\begin{bmatrix}
\bu \\ \eta
\end{bmatrix}
&= -
\begin{bmatrix}
\mathbf{q}\times & \nabla \\							
\nabla \cdot & 0		
\end{bmatrix}
\begin{bmatrix}
\delta \tilde{h}/\delta \bu = \eta \bv 
\\
\delta \tilde{h}/\delta \eta = \tilde{\cal B}
\end{bmatrix}
\,,\end{split}
\label{LP-DSW-System2}
\end{align} 
for Hamiltonian \eqref{BKBK-Ham2D} written in the variables $(\mathbf{u},\eta)$ as
\begin{align}
\begin{split}
    \wt{h} (\mathbf{u},\eta) &:= \int_M \eta \frac{|\mathbf{u}|^2}{2} - \kappa\mathbf{u} \cdot\nabla\eta + \frac{1}{2} \eta^2\, d^2x \,, \quad \text{such that}\\
    \frac{\delta \wt{h}}{\delta u} &= \eta\mathbf{u} - \kappa \nabla \eta = \eta\bv\,, \quad \frac{\delta \wt{h}}{\delta \eta} = \frac12 |\mathbf{u}|^2 + \kappa\, {\rm div}(\mathbf{u}) +  \eta =:\tilde{\cal B}(\mathbf{u},\eta)\,.
\end{split}
\label{DSW-Ham-u}
\end{align} 

The PV scalar function $q$ satisfies an advection equation which may be readily obtained by taking the $\operatorname{curl}$ of the $\bu$ equation in \eqref{2D BKBK}
\begin{align}
\Big(\p_t + \mathbf{v} \cdot \nabla\Big)q =
\Big(\p_t + (\mathbf{u} - \kappa \nabla \ln\eta)\cdot \nabla\Big)q = 0
\,.\label{DSW-omega-eqn-fluid}
\end{align}
In turn, the PV advection equation \eqref{DSW-omega-eqn-fluid} implies conservation by the 2D BKBK equations in \eqref{LP-DSW-System2} of the quantity 
\begin{align}
C_\Phi := \int_{\mathbb{R}^2} \eta \Phi(q)\,d^2x
\,,\label{DSW-Casimir}
\end{align}
for any differentiable function $\Phi$. It is clear that the functional $C_\Phi$ is conserved for any differentiable function $\Phi$, since the depth $\eta$ and the potential vorticity $q$ are both advected quantities. Moreover, the variational derivative of the functional $C_\Phi$ comprises a null eigenvector of the Lie--Poisson operator in terms of variables $(m,\eta)$ in equation \eqref{BKBK-Lie-Poisson2D}.  Consequently, the functionals $C_\Phi$ in \eqref{DSW-Casimir} for the Lie--Poisson formulation of the 2D BKBK system would in fact be conserved for \emph{any} Hamiltonian depending on the variables $(m,\eta)$. 

\section{Stability of equilibrium solutions of the 2D BKBK system}\label{sec-4}
%%%%%%%%%%%%%%%%%%%%%%%%%%%%%%%%%%%%%%%%%%%%%%%%%%%%%%%%
This section derives equilibrium conditions for the 2D BKBK system following the nonlinear stability analysis of \cite{holm1985nonlinear}. This is done in two stages. First, we consider the critical points of the sum $h_\Phi(\mathbf{u},\eta) := 
h(\mathbf{u},\eta) + C_\Phi $ of the Hamiltonian found in section \ref{sec-3} plus its Casimir constants of motion, $C_\Phi$ in \eqref{DSW-Casimir}. Then, by demanding positivity of the second variation of $h_\Phi(\mathbf{u},\eta)$, we derive spectral conditions for linear Lyapunov stability of the corresponding energy-Casimir class of equilibrium solutions of the 2D BKBK system.

%%%%%%%%%%%%%%%%%%%%%%%%%%%%%%%%%%%%%%%%%%%%%%%%%%%%%%%%

Let $(\bu_e, \eta_e, q_e)$ be an equilibrium solution to the 2D BKBK system in \eqref{2D BKBK} augmented with the advection equation of scalar PV, denoted as $q$. We denote the equilibrium transport velocity vector $\bv_e$ by $\mathbf{v}_e = \mathbf{u}_e - \kappa \nabla \ln\eta_e$. The equilibrium solutions satisfy
\begin{align}
\mathbf{v}_e \times \omega_e \mathbf{\wh{z}} = \nabla\tilde{\cal B}(\mathbf{u}_e,\eta_e)
\,,\quad 
\mathbf{v}_e \cdot \nabla q_e = 0
\,,\quad\hbox{and}\quad
{\rm div}(\eta_e \mathbf{v}_e) = 0
\,.
\label{BK-StationCond}
\end{align}
Consequently, at equilibrium the gradients $\nabla\tilde{\cal B}_e$ and $\nabla q_e$ are collinear because they are both orthogonal in the $\mathbb{R}^2$ plane to $\bv_e$. 
A sufficient condition for this collinearity is a functional relationship which we choose to write as
\begin{align}
\tilde{\cal B}(\mathbf{u}_e,\eta_e) =  q_e \Phi'(q_e) -  \Phi(q_e) 
\,,\label{BK-B-Cond1}
\end{align}
for some twice differentiable function $\Phi(\xi)$, with $\xi \in \mathbb{R}$, defined wherever $\nabla q_e$ does not vanish.  

Applying the operator $q_e^{-1}\mathbf{\wh{z}}\times$ to the first equilibrium condition in equation \eqref{BK-StationCond} 
and using relation \eqref{BK-B-Cond1} yields 
\begin{align}
\eta_e \mathbf{v}_e  = q_e^{-1}\mathbf{\wh{z}} \times \nabla \tilde{\cal B}(\mathbf{u}_e,\eta_e) 
= q_e^{-1}\mathbf{\wh{z}} \times \nabla \big(q_e \Phi'(q_e) -  \Phi(q_e) \big) 
= \mathbf{\wh{z}} \times \nabla \Phi'(q_e)
\,,\label{BK-B-Cond2}
\end{align}
which also implies the last equilibrium condition in equation \eqref{BK-StationCond}; namely,  ${\rm div}(\eta_e \mathbf{v}_e) = 0$. 

\paragraph{The $1^{st}$ variation of $h_\Phi(\mathbf{u},\eta)$}

\begin{proposition}
Critical points of the sum of the 2D BKBK Hamiltonian and its Lie--Poisson Casimirs given by 
\begin{align}
\begin{split}
h_\Phi(\mathbf{u},\eta) := 
h(\mathbf{u},\eta) + C_\Phi 
:= 
\int_{\mathbb{R}^2} \eta \frac{|\mathbf{u}|^2}{2} 
- \kappa \mathbf{u} \cdot\nabla \eta
+ \frac{1}{2} \eta^2 + \eta \Phi(q)
\,d^2x 
\,,
\end{split}
\label{BKBK-Ham1 2D}
\end{align} 
are equilibrium solutions of the 2D BKBK system in \eqref{LP-DSW-System2}.
\end{proposition}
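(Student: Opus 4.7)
The plan is to derive the stationary conditions of $h_\Phi$ directly from its first variation in the variables $(\mathbf{u},\eta)$ and then to verify that those conditions coincide with the equilibrium relations \eqref{BK-B-Cond1}--\eqref{BK-B-Cond2}, from which \eqref{BK-StationCond} follows by standard vector-calculus identities.

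First, I would read off the variational derivatives of $h$ directly from \eqref{DSW-Ham-u},
\begin{align*}
\frac{\delta h}{\delta \mathbf{u}} = \eta\mathbf{u} - \kappa \nabla \eta = \eta\mathbf{v}\,,\qquad
\frac{\delta h}{\delta \eta} = \tfrac12|\mathbf{u}|^2 + \kappa\,\operatorname{div}(\mathbf{u}) + \eta = \tilde{\mathcal{B}}\,.
\end{align*}
The central step is the variation of the Casimir $C_\Phi = \int \eta\,\Phi(q)\, d^2x$. Using the definition $q = \eta^{-1}\mathbf{\wh{z}}\cdot\operatorname{curl}\mathbf{u}$ from \eqref{DSW-def-PV} I would compute
\begin{align*}
\delta q = -\frac{q}{\eta}\,\delta \eta + \frac{1}{\eta}\,\mathbf{\wh{z}}\cdot\operatorname{curl}\delta \mathbf{u}\,,
\end{align*}
substitute into $\delta C_\Phi = \int\bigl(\Phi(q)\,\delta \eta + \eta\,\Phi'(q)\,\delta q\bigr)\, d^2x$, and integrate by parts once in the curl term to extract
\begin{align*}
\delta C_\Phi = \int_{\mathbb{R}^2}\bigl(\Phi(q) - q\,\Phi'(q)\bigr)\,\delta \eta\, d^2x \;-\; \int_{\mathbb{R}^2}\bigl(\mathbf{\wh{z}}\times\nabla \Phi'(q)\bigr)\cdot \delta \mathbf{u}\, d^2x\,.
\end{align*}

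Collecting the coefficients of the independent variations $\delta \mathbf{u}$ and $\delta \eta$ in $\delta h_\Phi = \delta h + \delta C_\Phi = 0$ then yields
\begin{align*}
\eta_e\,\mathbf{v}_e = \mathbf{\wh{z}}\times \nabla \Phi'(q_e)\,,\qquad \tilde{\mathcal{B}}(\mathbf{u}_e,\eta_e) = q_e\,\Phi'(q_e) - \Phi(q_e)\,,
\end{align*}
which are exactly \eqref{BK-B-Cond1} and \eqref{BK-B-Cond2}. To close the proposition, I would verify that these two relations imply all three equilibrium conditions in \eqref{BK-StationCond}: the continuity equation $\operatorname{div}(\eta_e \mathbf{v}_e) = 0$ follows from the identity $\operatorname{div}(\mathbf{\wh{z}}\times\nabla f)\equiv 0$; the PV transport $\mathbf{v}_e \cdot \nabla q_e = 0$ follows because the first relation makes $\mathbf{v}_e$ proportional to $\mathbf{\wh{z}}\times \nabla q_e$ (via $\Phi''(q_e)$), which is perpendicular to $\nabla q_e$; and the Bernoulli-type balance $\mathbf{v}_e \times \omega_e\mathbf{\wh{z}} = \nabla \tilde{\mathcal{B}}$ follows by taking the gradient of the algebraic relation on the right, using $\omega_e = \eta_e q_e$ and the identity $\mathbf{\wh{z}}\times(\mathbf{\wh{z}}\times \mathbf{A}) = -\mathbf{A}$ for horizontal $\mathbf{A}$.

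The delicate point is the integration by parts for $\delta C_\Phi$: one must be careful to match the sign and orientation conventions of the resulting $\mathbf{\wh{z}}\times\nabla$ term with the $(\cdot)^T$ notation introduced just below \eqref{PB-def}. Once that bookkeeping is secured, the remaining implications reduce to routine vector-calculus identities together with the algebraic link $\omega_e = \eta_e q_e$ between vorticity and potential vorticity.
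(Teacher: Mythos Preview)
Your proposal is correct and follows essentially the same route as the paper: compute $\delta q$ from \eqref{DSW-def-PV}, integrate by parts in the curl term (under the same decay assumption on $\delta\mathbf{u}$), and read off the coefficients of $\delta\mathbf{u}$ and $\delta\eta$ to obtain the critical-point conditions, which match \eqref{BK-B-Cond1}--\eqref{BK-B-Cond2}. The paper's proof in fact stops at the first variation \eqref{Ham-var1} and leaves the passage from critical points to equilibria implicit in the Casimir property of $C_\Phi$; your additional step of directly checking the three equilibrium relations \eqref{BK-StationCond} via vector-calculus identities is a welcome explicit closure of that loop.
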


\begin{proof}
Upon using the definition $q:=\eta^{-1} \mathbf{\hat{z}} \cdot {\rm curl}\, \mathbf{ u}$ in \eqref{DSW-def-PV} to obtain the variational relation
\begin{align}
\delta q = -\, \eta^{-1} q \,\delta \eta + \eta^{-1} \mathbf{\hat{z}} \cdot {\rm curl}\, \delta \mathbf{ u}
\,,\label{Ham-q[var1}
\end{align}
and assuming that the velocity variation $\delta \mathbf{u}$ vanishes at spatial infinity of $\mathbb{R}^2$ (or has zero circulation on any boundary which may be present in an enclosed flow domain ${\cal D}$), then one finds that critical points of $h_\Phi(\mathbf{u},\eta)$ in \eqref{BKBK-Ham1} satisfy the following equations,
\begin{align}
\begin{split}
0 = \delta h_\Phi (\mathbf{u},\eta) 
&= \scp{ \eta \mathbf{u} - \kappa\, \nabla \eta - \mathbf{\hat{z}}\times \nabla \Phi'(q) } {\delta \mathbf{u}} 
\\ & \quad + \scp{ \tfrac12 |\mathbf{u}|^2 +  {\kappa}{\rm div}(  \mathbf{u} ) + \eta + \Phi(q) - q \Phi'(q)}{ \delta \eta } 
\\ & = \scp{ \eta \mathbf{v} - \mathbf{\hat{z}}\times \nabla \Phi'(q) } {\delta \mathbf{u}} 
+ \scp{ \tilde{\cal B}(\mathbf{u},\eta) + \Phi(q) - q \Phi'(q)}{ \delta \eta } 
\,,\label{Ham-var1}
\end{split}
\end{align}
in which the angle brackets $\scp{\,\cdot\, }{\,\cdot\, }$ denote $L^2$ pairing on $\mathbb{R}^2$ (or an enclosed flow domain ${\cal D}$).
\end{proof}

Thus, critical points of $h_\Phi (\mathbf{u},\eta)$ in \eqref{BKBK-Ham1} are equilibrium solutions of the 2D BKBK equations in \eqref{LP-DSW-System2}. These critical points satisfy the conditions
\begin{align}
\eta_e \mathbf{v}_e -  \mathbf{\hat{z}}\times \nabla \Phi'(q_e) = 0 
\quad\hbox{and}\quad
\tfrac12 |\mathbf{u}_e|^2 +  \kappa{\rm div}( \mathbf{u}_e ) + \eta_e + \Phi(q_e) - q_e \Phi'(q_e) = 0
\,,\label{Equil-conds}
\end{align}
for any choice of the functions $\Phi(q)$ of potential vorticity $q$ in \eqref{DSW-Casimir}. 
Consequently, the twice-differentiable functions $\Phi(q)$ provide a class of equilibrium solutions of the 2D BKBK system in \eqref{LP-DSW-System2}.

\paragraph{The 2nd variation of $h_\Phi(\mathbf{u},\eta)$}

Upon recalling the definitions of  $\tilde{\cal B}$ in \eqref{DSW-Ham-u} and $\mathbf{v}$ in \eqref{DSW-transport-vel},
\begin{align}
\tilde{\cal B}(\mathbf{u},\eta)
:= \tfrac12 |\mathbf{u}|^2 + g \eta + \kappa {\rm div}(\mathbf{u})  
\quad\hbox{and}\quad
\mathbf{v} :=  \mathbf{u} - \kappa \nabla \ln\eta
\,,\label{DSW-B-tilde-redux}
\end{align} 
one may rewrite the variation $\delta h_\Phi (\mathbf{u},\eta) $ in \eqref{Ham-var1} to separate its $\kappa$ terms as 
\begin{align}
\begin{split}
\delta h_\Phi (\mathbf{u},\eta) 
&= \scp{ \eta \mathbf{u}  - \mathbf{\hat{z}}\times \nabla \Phi'(q) } {\delta \mathbf{u}}  
\\ & \quad + \scp{ \tfrac12 |\mathbf{u}|^2  + \eta + \Phi(q) - q \Phi'(q)}{ \delta \eta } 
\\ & \quad  + \kappa \scp{{\rm div}( \mathbf{u} ) }{\delta \eta}
\,.\label{Ham-var1-redux}
\end{split}
\end{align}
One may then calculate the second variation $\delta^2 h_\Phi (\mathbf{u}_e,\eta_e) $ which is in fact the conserved energy for the linearised dynamics of perturbations of the BKBK equilibria arising as critical points of $h_\Phi (\mathbf{u}_e,\eta_e)$ in \eqref{BKBK-Ham1}.
The calculation of the second variation $\delta^2 h_\Phi (\mathbf{u}_e,\eta_e)$ then yields
\begin{align}
    \begin{split}
        \delta^2 h_{\Phi}(\bu_e,\eta_e) = \int_{\mathbb{R}^2}2\delta \eta\, \delta\bu\cdot \bu_e + \eta_e (\delta\bu)^2 + (\delta \eta)^2 - 2\kappa \delta \bu\cdot \delta (\nabla \eta) + \eta_e\Phi''(q_e)(\delta q)^2\,d^2x
    \,.\end{split}
\end{align}
By treating $\delta \eta$ and $\delta \nabla \eta$ as independent variables, one may organise $\delta^2 h_\Phi (\mathbf{u}_e,\eta_e)$ as
\begin{align}
    \delta^2 h_\Phi (\mathbf{u}_e,\eta_e) 
&=
\int_{\mathbb{R}} 
\begin{bmatrix}
\delta \mathbf{u} \\ \delta \eta \\ \delta \nabla \eta
\end{bmatrix}^T
\begin{bmatrix}
\eta_e & u_e & -\kappa
\\
u_e   & 1  &0 \\
-\kappa &0 &0
\end{bmatrix}
\begin{bmatrix}
\delta \mathbf{u} \\ \delta \eta \\ \delta \nabla \eta
\end{bmatrix}
+
\eta_e \Phi''(q_e)(\delta q)^2
d^2x\,. \label{2nd-var-conds-2D 1}
\end{align}
The linear stability of the 2D BKBK equilibrium solutions in the class $\delta h_{\Phi}(\bu_e,\eta_e) = 0$ requires the symmetric operator in \eqref{2nd-var-conds-2D 1} to be positive definite. Via direct calculation, the determinant of the symmetric operator, $\operatorname{\det}(\delta^2 h_{\Phi}) = -\kappa^2 < 0 $ such that the linear Lyapunov stability conditions is not satisfied. Enforcing the condition $\delta (\nabla\eta) = \nabla \delta \eta$, the extent to which linear Lyapunov stability condition is not satisfied is revealed by re-organising $\delta^2 h_\Phi (\mathbf{u}_e,\eta_e)$ as 
\begin{align}
\begin{split}
\delta^2 h_\Phi (\mathbf{u}_e,\eta_e) 
&=
\int_{\mathbb{R}} 
\begin{bmatrix}
\delta \mathbf{u} \\ \delta \eta
\end{bmatrix}^T
\begin{bmatrix}
\eta_e & u_e - \kappa\nabla
\\
u_e + \kappa\nabla\cdot  & 1  
\end{bmatrix}
\begin{bmatrix}
\delta \mathbf{u} \\ \delta \eta 
\end{bmatrix}
+
\eta_e \Phi''(q_e)(\delta q)^2
d^2x\,.\label{2nd-var-conds-2D}
\end{split}
\end{align}
For the case of shallow water dynamics arising for $\kappa=0$, requiring definiteness in sign of $\delta^2 h_\Phi (\mathbf{u}_e,\eta_e) $ provides sufficient
conditions for linear Lyapunov stability of 2D shallow water equilibrium solutions that are obtained from setting $\delta h_\Phi (\mathbf{u}_e,\eta_e) =0$ with $\kappa =0$. This was shown for similar equations for barotropic ideal compressible fluid dynamics in \cite{holm1983nonlinear,holm1985nonlinear}. For the singular perturbation of nonzero $\kappa$, though, linear stability of perturbed 2D BKBK equilibrium solutions in the class $\delta h_\Phi (\mathbf{u}_e,\eta_e) =0$ require in addition that the spectrum of the symmetric elliptic operator in \eqref{2nd-var-conds-2D} be positive definite. The symbol of the elliptic operator in \eqref{2nd-var-conds-2D} is of the form
\begin{align}
\sigma(|\mathbf{k}|^2) \approx \eta_e - \big( \mathbf{u}_e^2 + \kappa^2\,|\mathbf{k}|^2 \big) 
\,,\label{spectral-conds}
\end{align}
in which the equilibrium solutions $\eta_e$ and $\mathbf{u}_e$ satisfy ${\eta}_e(\bx)-\mathbf{{u}}_e(\bx)^2> 0$. Hence, the linear Lyapunov stability conditions obtained from requiring positivity of the second variation of the Hamiltonian $\delta^2 h_\Phi (\mathbf{u}_e,\eta_e)$ in \eqref{2nd-var-conds-2D} cannot hold for arbitrarily high wavenumber magnitudes $|\mathbf{k}|$.

%%%%%%%%%%%%%%%%%%%%%%%%%%%%%%%%%%%%%%%%%%%%%%%

\section{Simulations of solution behaviour for the BKBK equations}\label{sec-5}
%%%%%%%%%%%%%%%%%%%%%%%%%%%%%%%%%%%%%%%%%%%%%%%%%%%%%%%%
This section investigates computational simulations of the BKBK system in 1D and 2D. In the 1D simulations of the BKBK system \eqref{BK-System-BAK3}, high-wavenumber instabilities are controlled here by introducing 4th-order dissipation in both velocity and elevation, reminiscent of the Kuramoto--Sivashinsky equation. Computational simulations are then used to investigate the travelling wave solution and nonlinear wave interactions of initially Gaussian wave elevations. In the 2D simulations of the BKBK system \eqref{2D BKBK}, we employ a Hamiltonian regularisation where the 2D BKBK Hamiltonian \eqref{BKBK-Ham2D} is augmented to include an energy penalty for wave height gradients. Using the regularised BKBK equations \eqref{2DBKBK-system}, we investigate the interaction of waves with and without potential vorticity, for both signs of $\kappa$.
%%%%%%%%%%%%%%%%%%%%%%%%%%%%%%%%%%%%%%%%%%%%%%%%%%%%%%%%

\subsection{1D BKBK system simulation}
As discussed upon introducing the 1D BKBK system in section \ref{sec-BKBK-1D}, the dispersion relation for BKBK in equation \eqref{disp-noalpha} indicates high-wavenumber instability. Direct simulation for \eqref{BK-System-BAK3} is unstable even for analytical travelling solutions. This is reminiscent of the similar negative diffusion situation in the well-known Kuramoto-Shivashinski equation in 1D, 
\begin{align}
u_t+u u_x+u_{x x}+u_{x x x x}=0
\,.\label{KS-eqn} 
\end{align}
For periodic boundary conditions, we introduce the following 4th-order dissipation term into the motion equation of the BKBK 1D system \eqref{BK-System-BAK3} as,
\begin{align}\begin{split}
{u}_t &= - \left({u}^2 / 2+{\eta}+\kappa {u}_x+\nu u_{xxx}\right)_x 
\,,\\
{\eta}_t &= - \left({u} {\eta}-\kappa {\eta}_x + \nu \eta_{xxx}\right)_x
\,.\label{BK-System-BAK3-reg2} 
\end{split}
\end{align} 
When the fourth-order dissipation coefficients are unequal, $\nu_u \neq \nu_\eta$, the dispersion relation acquires a $k^6$ cross term proportional to $\kappa\left(\nu_u-\nu_\eta\right) k^6$ in the quadratic form for $-i \omega$. This term couples phase and damping and can create a finite band of unstable wavenumbers even when both $\nu_u, \nu_\eta>0$. The symmetric choice of $\nu$ as in \eqref{BK-System-BAK3-reg2} leads to the following dispersion relation
$$
\omega/k = -i\nu k^3 \pm \sqrt{g\eta_0-\kappa^2 k^2}
\,,\quad\hbox{with}\quad g\eta_0 = 1
\,,\label{disperse-relation}
$$
which is the same as \eqref{disp-noalpha} with an additional pure damping term and two dimension-free numbers $g\eta_0/(k \kappa)^2$ and $k^2 \nu/\kappa$. Here the terms $k \kappa$, $\sqrt{g\eta_0}$, $\nu k^3$, $\omega/k$ all have the units of velocity, $[u]=[L]/[T]$. For wavenumbers $|k|>k_c=\sqrt{g \eta_0} /|\kappa|$, choosing $\nu>\nu_{cr}=\frac{2|\kappa|^3}{3 \sqrt{3} g \eta_0}$ is necessary to ensure the growth rate satisfies $\operatorname{Im} \omega(k)<0$ for all $k$, thereby eliminating the unstable band of modes. 

\begin{figure}[H]
    \centering
    \begin{subfigure}[t]{0.45\linewidth}
        \centering
        \includegraphics[width=\linewidth]{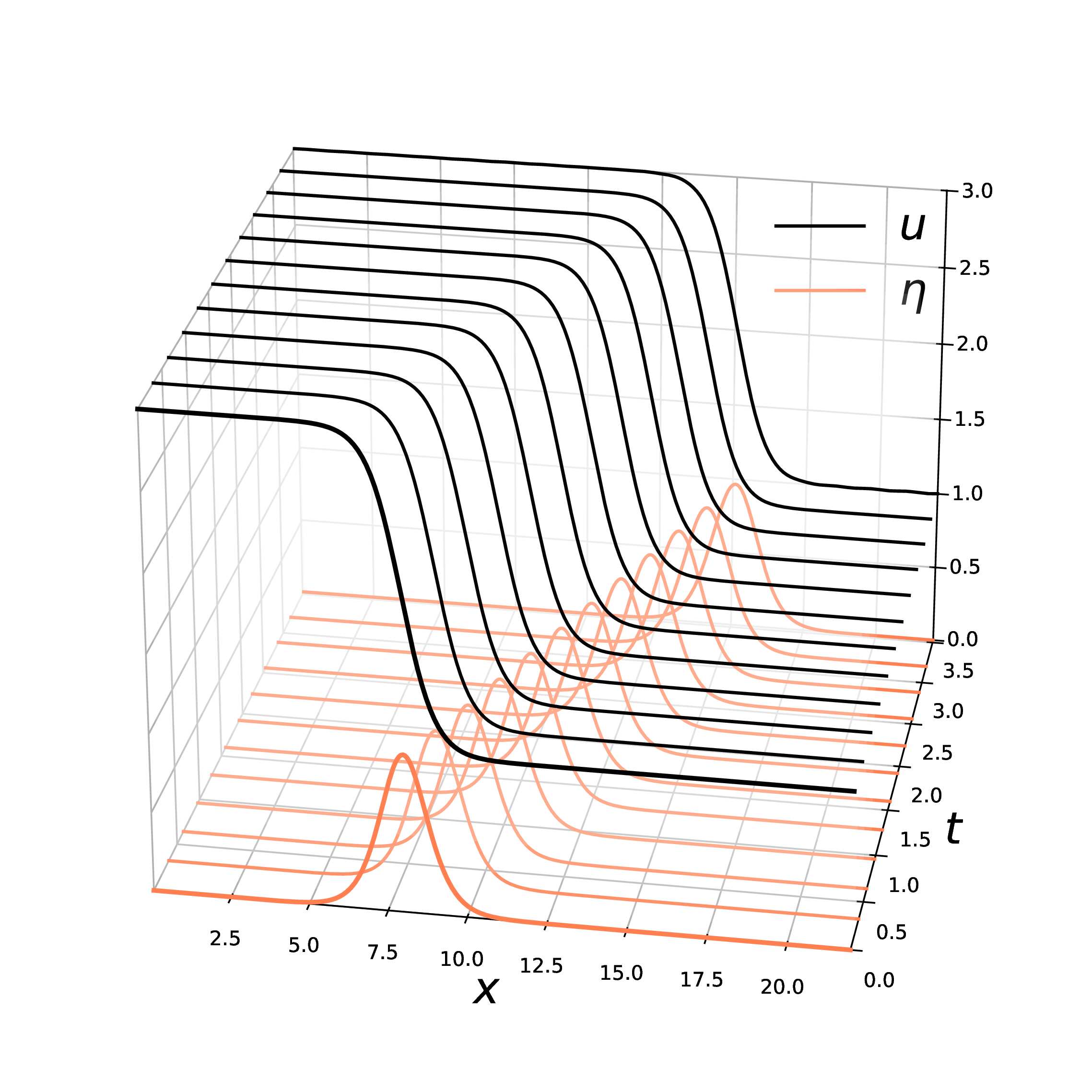}
        \caption{1D BKBK travelling wave \eqref{eq: bkbk_tw} preserves its structure for $\nu>0$. $\kappa = 0.5, c=2, \lambda=2,  \nu=0.01 $}
        \label{fig: BKBK_tw}
    \end{subfigure}
    \hfill
    \begin{subfigure}[t]{0.45\linewidth}
        \centering
        \includegraphics[width=\linewidth]{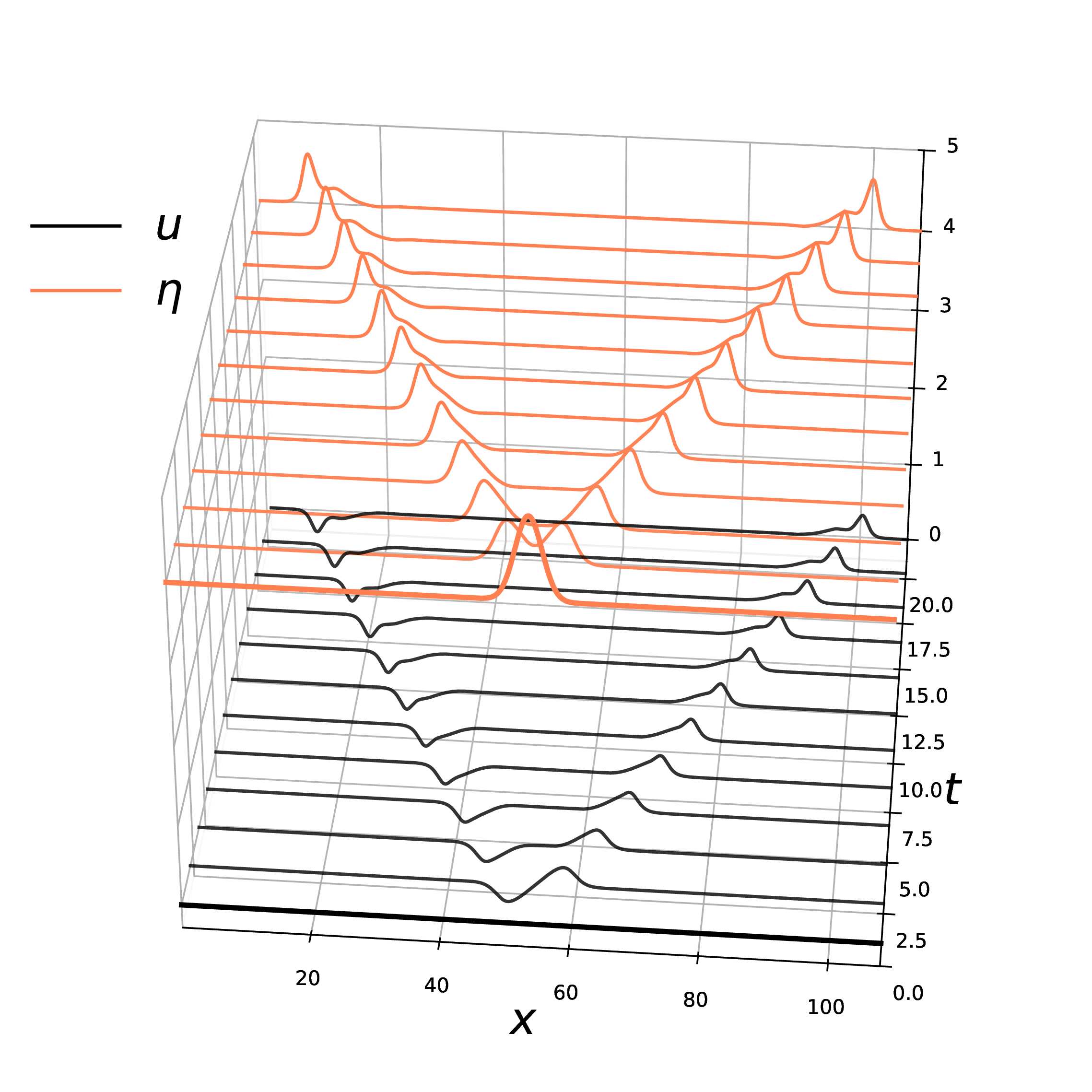}
        \caption{Two counter-propagating pulses generated from a Gaussian depth initial condition: $u=0$, $\eta=4+\exp(-(x-54)^2/8)$, $\kappa=-0.5, \nu=0.01$. }
        \label{fig: BKBK_gaussian}
    \end{subfigure}
    \caption{Space–time evolution of the velocity $u$ (black) and depth $\eta$ (orange) for the 1D BKBK system regularized with symmetric fourth-order dissipation, cf. \eqref{BK-System-BAK3-reg2}. The figures show that fourth-order dissipation with $0<\nu\ll1$ prevents the ill-posed growth observed in the unregularized system.}
    \label{fig: 1dbkbk_simulation}
\end{figure}

% \todo[inline]{DH \& Maneesh: Looks good! What is the argument of $\exp$? }

\begin{figure}[H]
    \centering
    \begin{subfigure}[t]{0.45\linewidth}
        \centering
        \includegraphics[width=\linewidth]{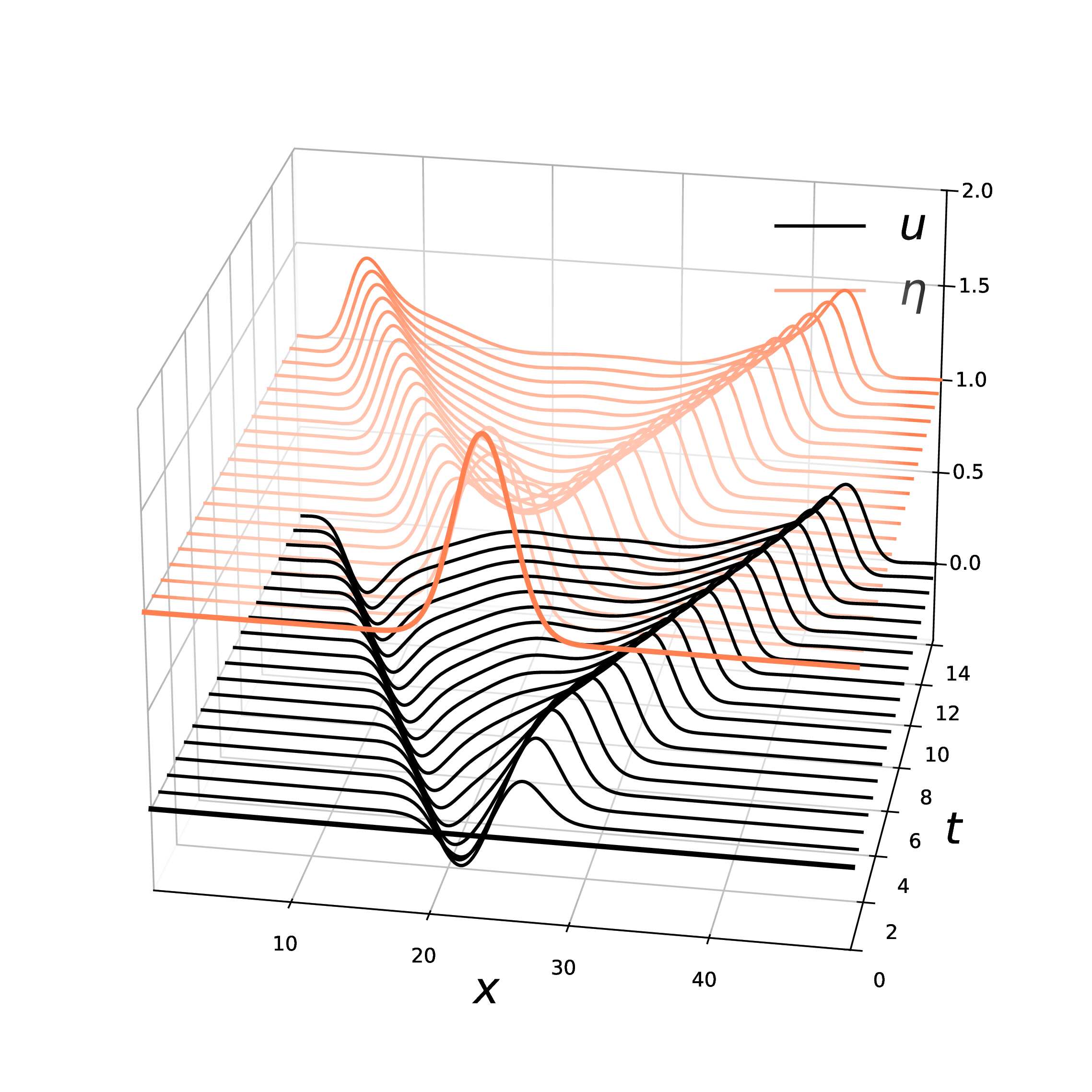}
        \caption{$\kappa =-0.5$}
        \label{fig: 1dbkbk_kappa-0.5}
    \end{subfigure}
    \hfill
    \begin{subfigure}[t]{0.45\linewidth}
        \centering
        \includegraphics[width=\linewidth]{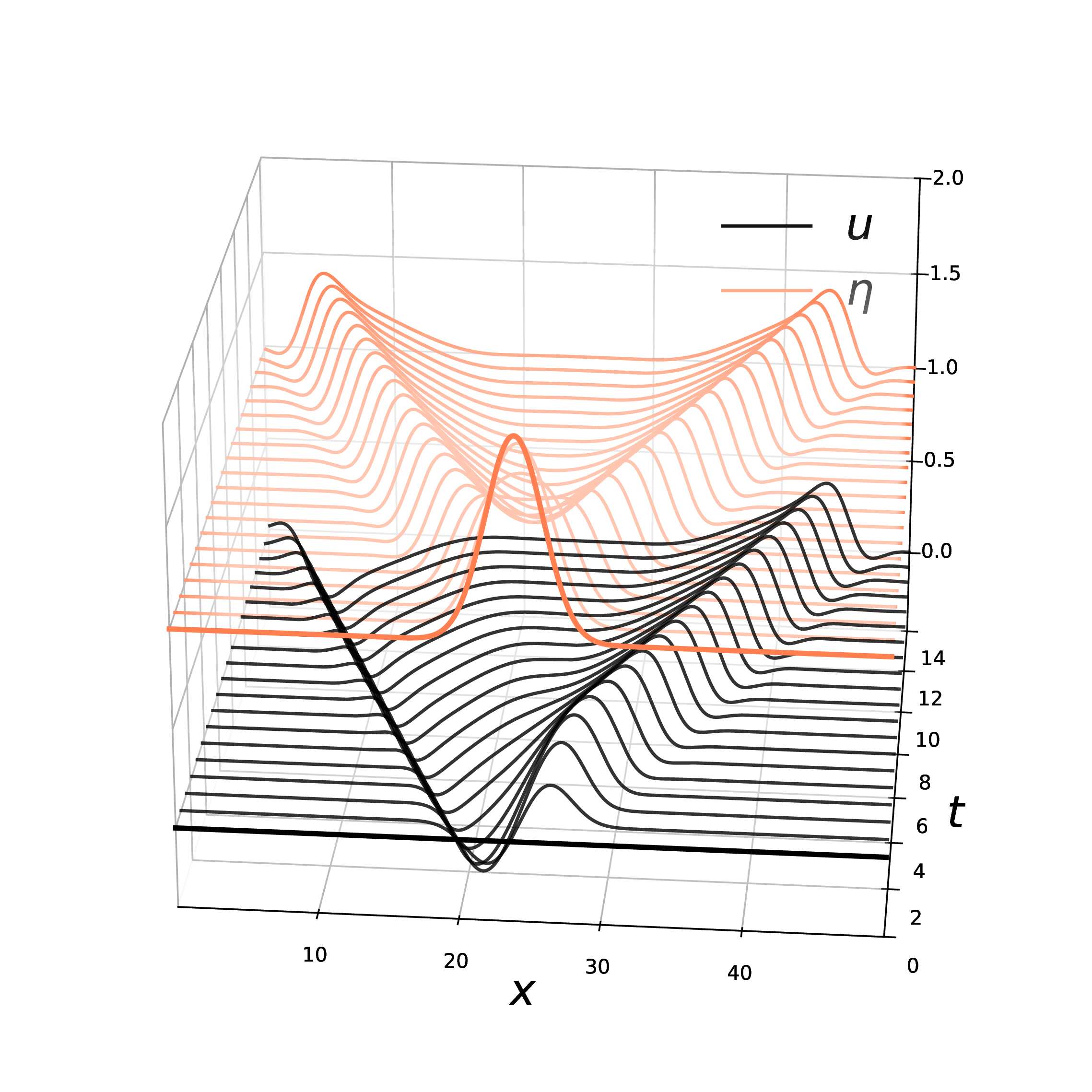}
        \caption{$\kappa=-0.1$}
        \label{fig: 1dbkbk_kappa-0.1}
    \end{subfigure}
    \caption{Waterfall plot of the regularized 1D BKBK system with symmetric fourth-order diffusion, cf. \eqref{BK-System-BAK3-reg2}. For the Gaussian initial condition $u=0$, $\eta=1+\exp(-(x-24)^2/8),\nu=0.01$, the solution rapidly splits into two dominant crests which propagate in opposite directions. While this bidirectional splitting occurs for both $\kappa= -0.5$ and $\kappa=-0.1$, the case $\kappa=-0.1$ in the right panel exhibits a small leading depression ahead of the main crest, with $\eta<\eta_0$, which is absent for $\kappa=-0.5$ in the left panel.}
    \label{fig: 1dbkbk_diffkappa}
\end{figure}

We use the nonsingular travelling wave \eqref{eq: bkbk_tw} as a benchmark initial condition (figure \ref{fig: BKBK_tw}), as reported in \cite{XIE200176}. 
\begin{equation}
\begin{split}
u(x, t)&=c-\lambda |\kappa| \tanh \left(\frac{\lambda}{2}(x-c t+\phi)\right)\,, \\
\eta(x, t)&=\frac{\lambda^2}{2} |\kappa|(|\kappa|+\kappa) \operatorname{sech}^2\left(\frac{\lambda}{2}(x-c t+\phi)\right)
\,,\end{split}
\label{eq: bkbk_tw}
\end{equation}
where $\lambda, c, \phi$ are constants. Although it is an exact solution of the unregularized BKBK system, direct time stepping of \eqref{BK-System-BAK3} is numerically unstable. The symmetric regularisation \eqref{BK-System-BAK3-reg2} provides a controlled setting for assessing (i) preservation of the wave profile at low wavenumbers, (ii) suppression of high-$k$ growth, and (iii) the dependence of phase speed and spectral energy transfer on dissipation.

\subsection{2D BKBK system simulation}
In the computational simulations of the two-dimensional BKBK system \eqref{2D BKBK}, we find that no fourth-order dissipation is required. Instead, we use a modified BKBK system derived from a regularised Hamiltonian constructed by augmenting the BKBK Hamiltonian \eqref{DSW-Ham-u} to include an energy penalty for large wave slope as presented in \cite{holm2025geometric}. Namely,
\begin{equation}
h(m, \eta) = \int_M \frac{|\mathbf{m}|^2}{2 \eta}-\kappa \mathbf{m} \cdot \nabla \ln (\eta)+\frac{1}{2}\left(\eta^2+\alpha^2|\nabla \eta|^2\right) d^2 x    
\,,\label{DSW-Ham-reg}
\end{equation}
whose variational derivatives are given by
\begin{align}
\begin{split}
    \frac{\delta h}{\delta \mathbf{m}} 
    & =\mathbf{m} / \eta-\kappa \nabla \ln (\eta)
    =\mathbf{u}-\kappa \nabla \ln (\eta) =: \bv, \\
\frac{\delta h}{\delta \eta} & =-\frac{|\mathbf{m}|^2}{2 \eta^2}+\frac{\kappa}{\eta} \operatorname{div} \mathbf{m}+\mathrm{g}\left(1-\alpha^2 \Delta\right) \eta =: -{\cal B} 
\,.
\end{split}
\label{Ham-DSW}
\end{align}
The regularised Hamiltonian \eqref{DSW-Ham-reg} mimics the Hamiltonian formulation of the Lagrangian Averaged Navier Stokes $\alpha$ (LANS-$\alpha$) subgrid scale model \cite{CFHOTW1998} in which the gradient energy penalty is on the velocity $u$, rather than the height $\eta$.
This energy penalty introduces nonlinear dispersion, which suppresses high-wavenumber activity without artificial viscosity, thereby preserving the conservative/geometric structure (energy, Kelvin circulation, and potential-vorticity conservation in the classic shallow water model). As shown below in the results of numerical simulations, coherent vortical features remain sharp and the high-wavenumber blow-up seen in 1D is controlled.
% Thus, in 2D the BKBK system may be regularised effectively by an energy penalty for developing high wave slope rather than by fourth-order diffusion as for BKBK in 1D.

After transforming from the Lie--Poisson operator form \eqref{BKBK-Poisson2}
the regularised 2D BKBK system may be expressed in fluid dynamics notation as
\begin{align}
\begin{split}
\partial_t \mathbf{u}&+(\mathbf{v} \cdot \nabla) \mathbf{u}+u_j \nabla v^j  =\nabla {\cal B}, \\
\partial_t \eta &+\operatorname{div}(\eta \mathbf{v})  =0, \qquad 
\bv := \mathbf{u}-\kappa \nabla \ln (\eta) 
\,.\end{split}
\label{2DBKBK-system}
\end{align}
The 2D BKBK system \eqref{2DBKBK-system} represents shallow water dynamics with modified transport velocity $\mathbf{v}$ and Bernoulli function $B(\mathbf{u}, \eta)$ defined in \eqref{Ham-DSW}. The remainder of this section provides examples of their solution behaviour. Section \ref{sec-4} reprises their variational derivation. 

We solve the two-dimensional shallow water equations using the pseudo-spectral method in a periodic domain of size $L_x \times L_y=16 \times 16$ with resolution $192 \times 192$. Time integration uses the semi-implicit SBDF2 scheme with a timestep $10^{-6}$, up to a final time $t=2.0$. Model parameters are $g=1, \alpha=0.02$, $\kappa=-0.5$, and $\kappa=-0.05$. The background depth is $\eta_0=4.0$ with no mean flow. The initial free-surface height is given by two positive Gaussian ridges of amplitude $h_0=4.0$ (width $\sigma=0.7$ ) centered at $(x, y)=\left(L_x / 2 \pm \delta x, 0\right)$, with $\delta x=1.1 \sigma$, together with two weaker negative Gaussian anomalies of amplitude $-0.01 h_0$ at $(x, y)=\left(L_x / 2, \pm \delta y\right)$, with $\delta y=1.7 \sigma$. The velocity field is initialized in geostrophic balance $u = -g \hat{\mathbf{z}} \times \nabla \eta/f_0$ with $f_0=50$.

\begin{figure}[H]
    \centering
        \includegraphics[width=1\linewidth]{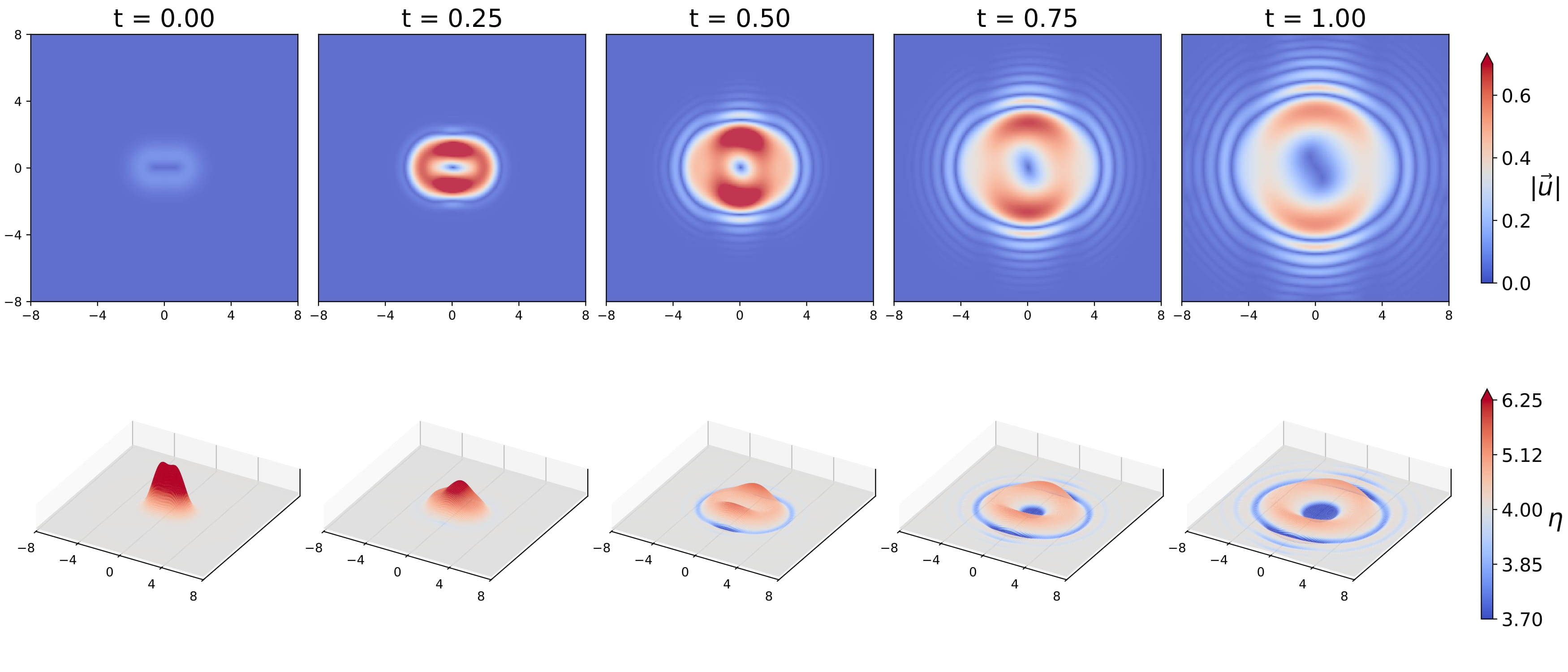}
        \caption{Simulation of the 2D BKBK equation initialised with two Gaussian ridges, for negative $\kappa=-0.05$. Top: velocity magnitude $|\vec{u}|$. Bottom: surface height $\eta$ as a 3D surface. At $t=0$, the free surface consists of two initial Gaussian ridges. By $t=0.25$, the two peaks merge into a single crest, generating outward-propagating ring waves. At $t=0.50$, the surface splits along the $y$-direction, producing two secondary peaks while the ring pattern continues to expand. At $t=0.75$, a depression forms at the center, accompanied by strong outward-propagating velocity oscillations; the free-surface height also exhibits significant fluctuations.}
    \label{fig: gaussian_005}
\end{figure}

\begin{figure}[H]
    \centering
        \includegraphics[width=1\linewidth]{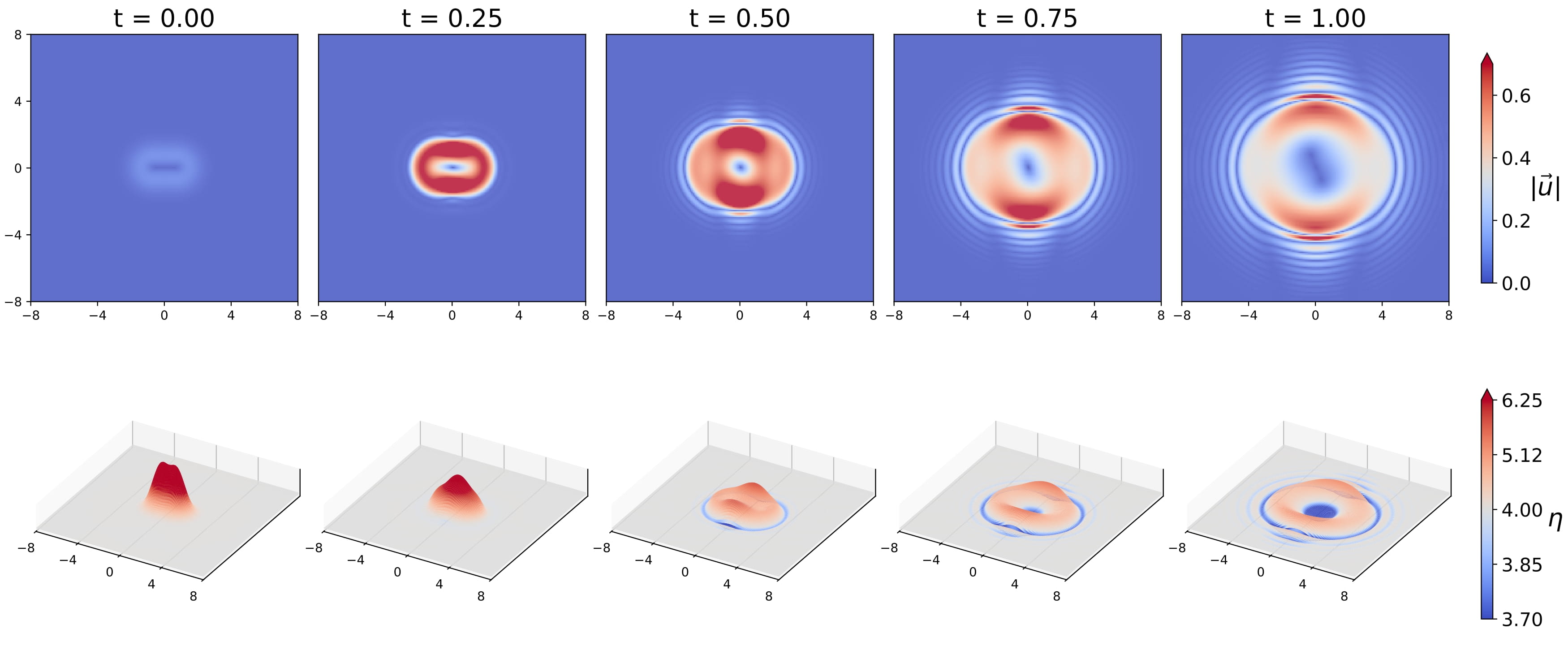}
        \caption{Comparison of simulations with different values of $\kappa=-0.5$. The outer rings are smaller at $t=1$, indicating slower outward wave propagation compared to $\kappa= -0.05$. At $t=0.25$, the surface profile also collapses more slowly, and the resulting ring structures remain more compact.}
    \label{fig: gaussian_05}
\end{figure}

We solve the two-dimensional shallow water equations in a periodic domain of size $L_x \times L_y=24 \times 16$, discretized with $384 \times 256$ grid points. The initial condition is a localized perturbation of velocity in the $x$ component, given by $u_x(x, y) = W(-8/3, 8/3 ; 0.5 ; y) W(10,11 ; 0.5 ; x)$ where $W(a, b ; \delta ; z)=\frac{1}{2}\left[\tanh \left(\frac{z-a}{\delta}\right)-\tanh \left(\frac{z-b}{\delta}\right)\right]$ is a smooth rectangular window. Other parameters and numerical settings are the same as in the previous experiments.

\begin{figure}[H]
    \centering
    \includegraphics[width=1\linewidth]{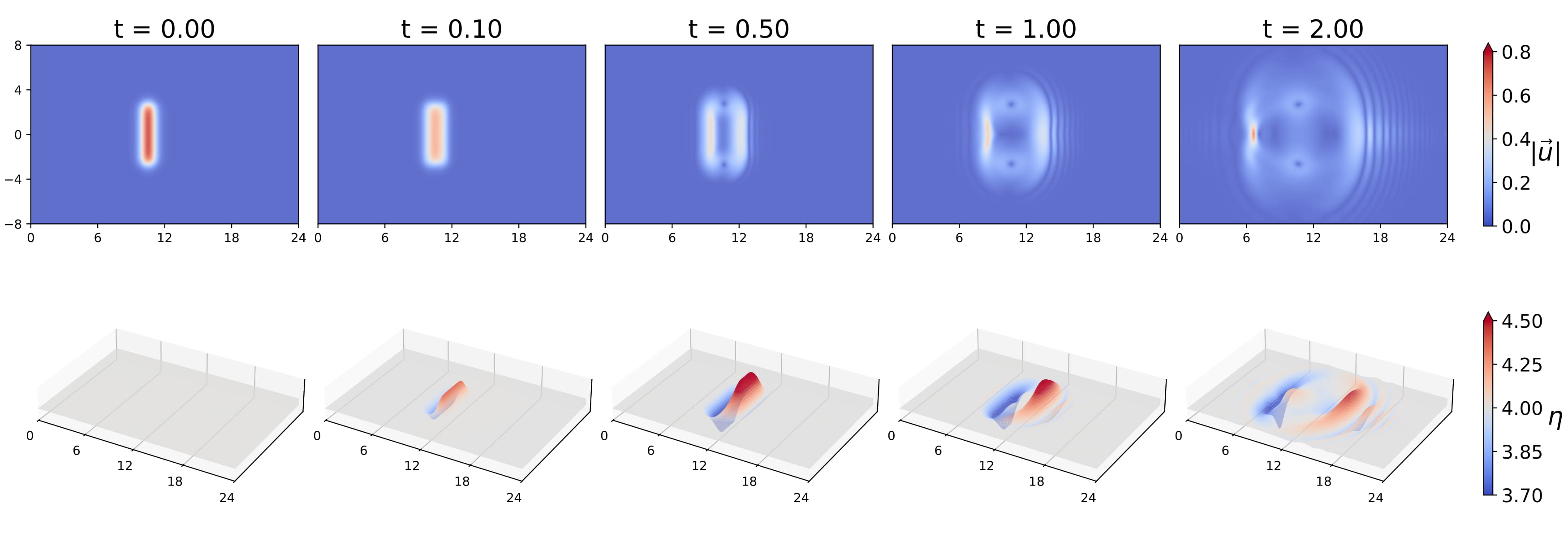}
    \caption{Simulation of the 2D BKBK equation with a localized tanh-segment initialization, for negative $\kappa=-0.5$. Top: velocity magnitude $|\vec{u}|$. Bottom: surface height $\eta$ as a 3D surface. At $t= 0$, the flow is initialized with a narrow rectangular segment in $u_x$. By $t=0.1$, the disturbance begins to tilt and induces small surface deflections. At $t=0.5$, the velocity field splits into two lobes with a vortex dipole while the surface height develops two crests and moves in the opposite direction. At $t= 1$, outward-propagating oscillations appear and the central structure weakens. By $t= 2$, the disturbance has radiated into compact concentric wavefronts.}
    \label{fig: seg_neg}
\end{figure}
\begin{figure}[H]
    \centering
    \includegraphics[width=1\linewidth]{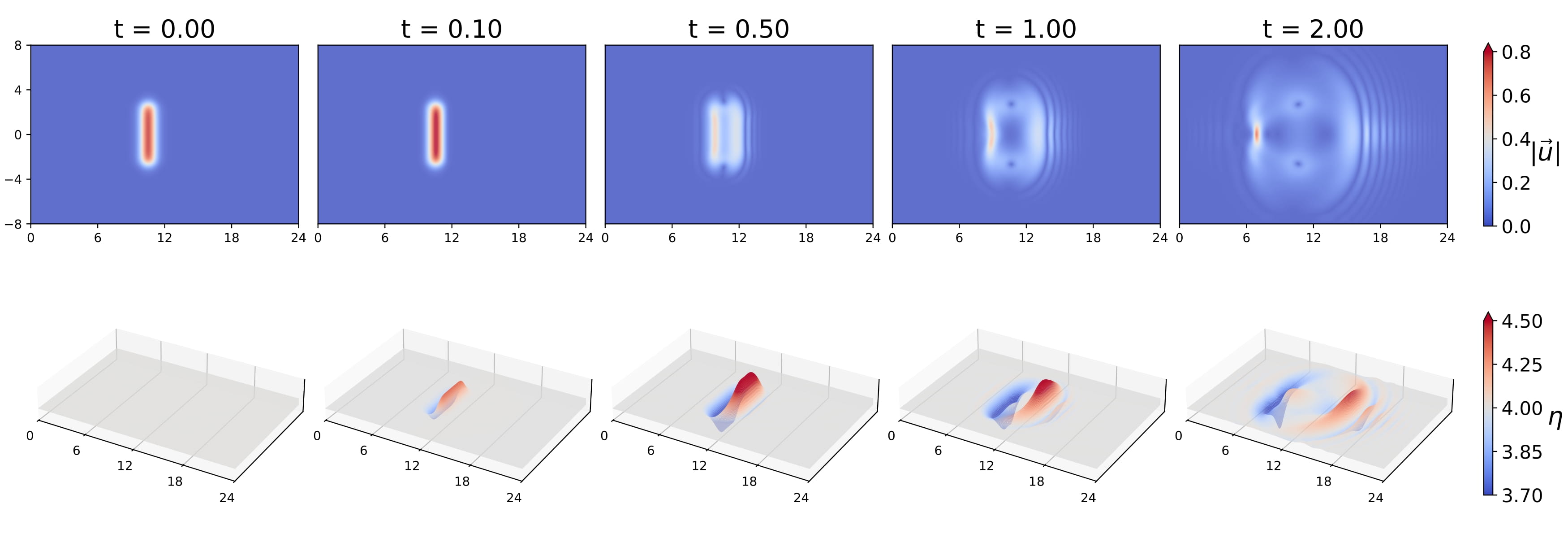}
    \caption{Comparison of simulations on tanh-segment with $\kappa=+0.5$. At $t=0.10$, the velocity segment does not diffuse but instead intensifies, as indicated by the stronger red shading. The subsequent evolution resembles the $\kappa=-0.5$ case, but the overall development proceeds more slowly.}
    \label{fig: seg_pos}
\end{figure}

% \begin{figure}[H]
%     \centering
%     \includegraphics[width=0.9\linewidth]{fig/phase_eta_shifted_xy}
%     \caption{In this figure, the upper (resp, lower) panels represent wave elevation (resp, velocity). Initially, two segments of radial tanh-shaped ridges in wave-elevation of different lengths meet at an angle in still water with $\eta=4$. For $\kappa=0.5$, these two segments each generate waves which propagate by expanding in opposite directions away from the initial orientation. The diffractive mutual interactions occuring during their propagation generate complex wave structures in amplitude and velocity which take similar forms for both positive and negative values of $\kappa$. Thus, a local change in curvature in 2D BKBK wave elevation can be expected to generate complex wave activity. }
%     \label{fig:5}
% \end{figure}

\begin{figure}[H]
    \centering
    \includegraphics[width=1\linewidth]{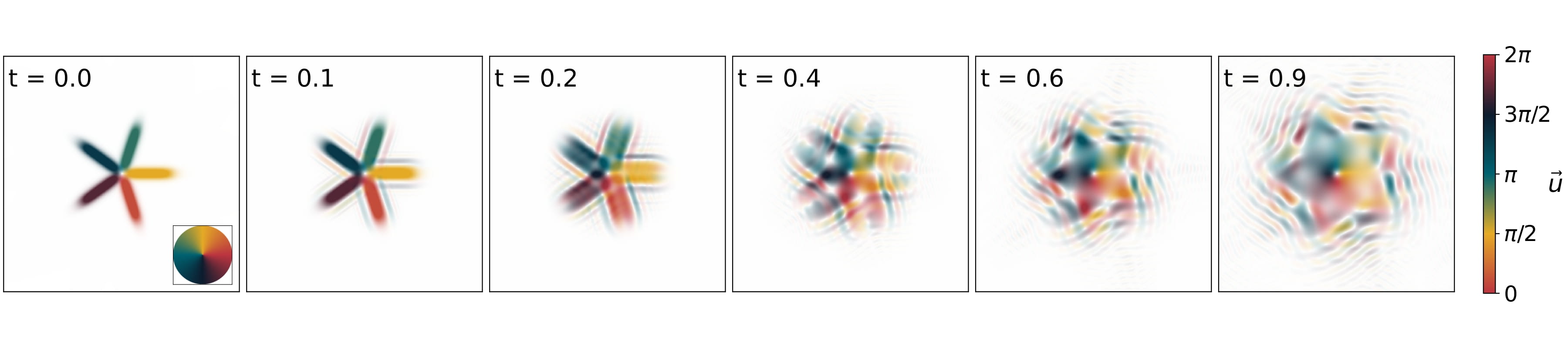}
    \caption{This figure shows snapshots of a 2D BKBK simulation for five-fold symmetric radial tanh-segments with counter-clockwise tangential flow and $\eta=4$ and $\kappa=0.5$. Colours encode the orientation $\theta \in[0,2 \pi)$ of the 2D velocity $\bu$; saturation/opacity indicates the speed $|\bu|$. Snapshots at $t=0,0.1,0.2,0.4,0.6,0.9$ are shown. At $t=0$, the field is initialized with  five radial tanh-segments with counter-clockwise tangential flow and $\eta=4$. Over time, the pattern disperses and develops ring-like/rippled fine scales, reflecting increased directional shear and multiscale interactions.}
    \label{fig: star_kappa0.5}
\end{figure}
\begin{figure}[H]
    \centering
    \includegraphics[width=1\linewidth]{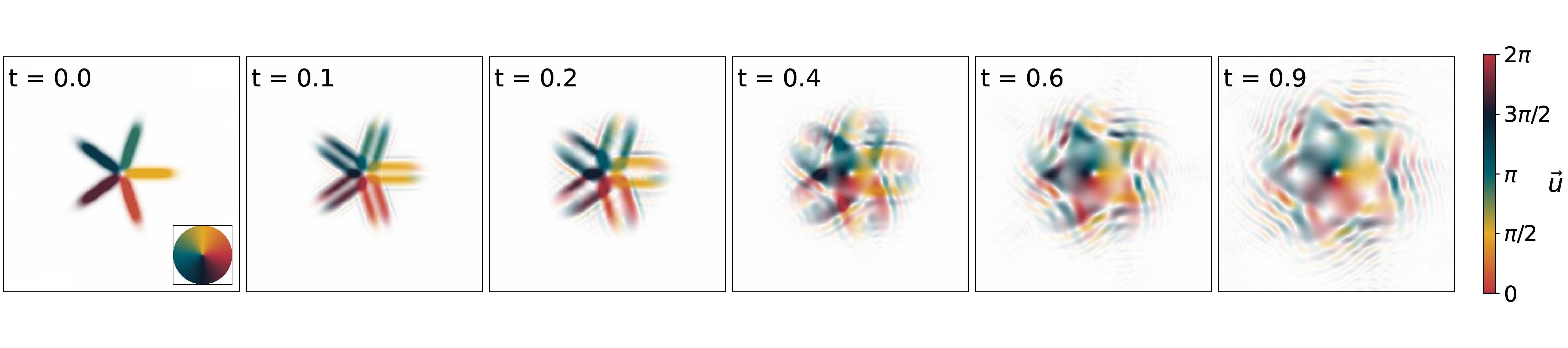}
    \caption{This figure shows a 2D BKBK simulation of five-fold symmetric radial tanh-segments of wave-elevation with $\kappa=-0.5$, in the same initial configuration as in Fig. \ref{fig: star_kappa0.5}. In comparison, negative $\kappa$ exhibits faster radial spreading. However, the differences in solution behaviour depending on the $\pm$ sign of $\kappa$ at early times tend to subside at later times. The tendency for $\pm\kappa$ similarity in solution behaviour at later times is consistent with the $\kappa^2$ dependence of the linearised Lyapunov stability results derived in section \ref{sec-5}. Remarkably, the five-fold symmetry is preserved and local changes in curvature of wave phase arise at the outer edges which may be expected to generate further changes in shape. \\
    }
    \label{fig: star_kappa-0.5}
\end{figure}
% \todo[inline]{DH: Hi Hanchun, would you like to chat for 5 minutes to catch up a bit?\\
% HW: Hi Darryl, of course. You can call me on teams. Hanchun. \\
% Your figures look good, Hanchun. The tendency for $\pm\kappa$ similarity in solution behaviour at later times is consistent with the $\kappa^2$ dependence of the linearised Lyapunov stability results in section \ref{sec-5}.  }
%%%%%%%%%%%%%%%%%%%%%%%%%%%%%%%%%%%%%%%%%%%%%%%%%%%%%%%%

\section{Summary conclusion and open questions}\label{sec-6}
%%%%%%%%%%%%%%%%%%%%%%%%%%%%%%%%%%%%%%%%%%%%%%%%%%%%%%%%
\begin{comment}
As stated in Remark \ref{rem:NLS}, the representation of the BKBK equation in terms of transport velocity
$v$ is equivalent under the inverse Madelung transformation to the focusing NLS equation 
for an imaginary parameter $\kappa = i/2$. Hence, the analytical and computational investigations here of the 1D and 2D BKBK system may be regarded as being complementary to analogous investigations of modulation instability via the 1D and 2D NLS equations. 
%
 The 2D BKBK system \eqref{2DBKBK-system} derived in section \ref{sec-4} modifies the transport velocity $\mathbf{v}$ and Bernoulli function $B(\mathbf{u}, \eta)$ of classical shallow water dynamics, as defined in \eqref{Ham-DSW}. After introducing these equations, section \ref{sec-3} demonstrates their solution behaviour in several examples. Section \ref{sec-4} reprises their variational derivation. Section \ref{sec-5} derives the conditions for equilibrium solutions of the 
 2D BKBK system \eqref{2DBKBK-system} to be linearly Lyapunov stable.
\end{comment}
%%%%%%%%%%%%%%%%%%%%%%%%%%%%%%%%%%%%%%%%%%%%%%%%%%%%%%%%

One of the first observations in this paper is that the BKBK system introduces a dynamical shift in the transport velocity in the motion equations and the advection equations in both the 1D and 2D BKBK systems. This dynamical shift in the transport velocity is proportional to the real constant $\kappa$, and when $\kappa=0$ one formally recovers the classical shallow water wave equations. See, e.g., equation \eqref{BK-System-BAK3}. 

However, the BKBK shift in transport velocity comprises a singular perturbation which also introduces backwards diffusion in the motion equation. As shown in equation \eqref{disp-noalpha} for 1D BKBK, this feature produces high-wavenumber instabilities whose exponential growth rates increase as $\omega(k) = \pm ik\sqrt{1-\kappa^2k^2}$. Thus, although the 1D BKBK system is known to be completely integrable \cite{kaup1975higher, kupershmidt1985mathematics}, its solution behaviour is ill-posed for real values of $\kappa$. One notices, though, that this high-wavenumber instability would not occur if the parameter $\kappa$ were imaginary, instead of being real.

Section \ref{sec-2} reframes the 1D BKBK system as a Lie--Poisson Hamiltonian system 
and then explores its relation with modulation instability \cite{peregrine1983water}, by noticing that %the BKBK Hamiltonian $\mathfrak{h}(v,\eta)$ in \eqref{Ham-v} is similar to the Hamiltonian of the Nonlinear Schr\"odinger (NLS) equation, when expressed in 
the Madelung transformation  
%
%$(\eta, \phi)$ related to the complex wave function $\psi$ by $\psi = \sqrt{\eta}\exp(i\phi)$ and $v = \p_x \phi$ \cite{M1927}. 
%In fact, after choosing $\kappa = {i}/{2}$, the change of variables $\eta v=\eta \p_x \phi ={\rm Im}(\psi^*\p_x \psi)$ and $\eta=\|\psi\|^2$ 
%
transforms the BKBK Hamiltonian in \eqref{Ham-v} into the Hamiltonian for the focusing NLS equation. %with complex canonical variables $(\psi,\psi^*)$ and Planck constant $\hbar$ set equal to unity. 
%Inserting $\kappa = {i}/{2}$ into the dispersion relation $\omega^2(k^2)$ in \eqref{disp-noalpha} also changes the BKBK instability for real $\kappa$ into a travelling wave with dispersion relation, 
%\begin{equation}
    %\omega^2(k^2) = k^2\left(1 + \tfrac14 k^2\right)  \,. 
%\end{equation}
%
Given this observation, studies of instability of water waves using the 1D BKBK system with real $\kappa$ may be regarded as being complementary to the traditional study of modulation instability of water waves using the 1D NLS equation as in \cite{peregrine1983water}. 
%The Madelung transform of the 1D BKBK system for $\kappa=i/2$ to the focusing NLS equation also applies for the 2D BKBK system, but only in the case of potential flow, as in equation \eqref{2D-potentflow}.
%The BKBK system has been associated with the NLS equation in 1D and in 2D for potential flow with imaginary $\kappa$. Nonetheless, 
%To make progress toward computationally simulating the solution behaviour of BKBK with real $\kappa$, some regularisation of BKBK must be considered. 

In preparation for developing ideas for regularisation of the BKBK system to help meet the challenges of its computational simulations, section \ref{sec-3} explains the variational derivations of the 2D BKBK system, as well as its Hamiltonian structures and their implications such as the Lagrangian advection of potential vorticity and its associated conservation laws.

Section \ref{sec-4} uses the Hamiltonian structures investigated in section \ref{sec-3} to develop a classification of the equilibrium solutions of the 2D BKBK system and to determine the conditions for the linear Lyapunov stability of these equilibria by using the energy-Casimir approach \cite{holm1985nonlinear}. The Lyapunov stability conditions obtained from the energy-Casimir approach extend the class of 2D BKBK equilibria to finite velocity and match the instability results obtained by linearising the BKBK equations around equilibria with vanishing velocity and constant elevation.  

In section \ref{sec-5}, the figures for the computational simulations of the regularised 1D BKBK system \eqref{BK-System-BAK3-reg2} show that introducing fourth-order dissipation with coefficient $0<\nu\ll1$ prevents the ill-posed growth observed in the 1D BKBK unregularised system with $\nu=0$. 
This regularisation of the 1D BKBK system by introducing fourth-order dissipation has the advantage that it may also introduce a new variant of the Kuramoto--Shivashinsky equation \cite{cross1993pattern} which may have interesting low dimensional solution behaviour for 1D BKBK shallow water wave dynamics, although this investigation would be for future research. 

In 2D, the BKBK transport velocity shift also appears in the material loop velocity in the 2D Kelvin circulation theorem as well as in the transport of the potential vorticity, as discussed in section \ref{sec-4}. In studying 2D computational simulations of the BKBK system \eqref{2DBKBK-system} in section \ref{sec-5}, we used a regularised Hamiltonian accomplished by augmenting the 2D BKBK Hamiltonian to include an energy penalty for large wave slope. See equation \eqref{DSW-Ham-reg}. 

Section \ref{sec-5} shows the first simulations of the regularised BKBK dynamics derived here in both 1D and 2D. These simulations represent proof of principle and indication of interesting dynamics of the regularised BKBK flows. Further investigations of the sensitivity of the solutions to the magnitude and sign of $\kappa$, for example, are deferred for future research. 

\color{black}%%%%%%%%%

\subsection*{Acknowledgements.}
DH is grateful for the time he enjoyed with his friends David Kaup and Boris Kupershmidt collaborating together on the topic of integrable Hamiltonian systems.  During this work DH and HW were partially supported by European Research Council (ERC) Synergy grant Stochastic Transport in Upper Ocean Dynamics (STUOD) -- DLV-856408. The work of RH is partially supported by the Office of Naval Research (ONR) grant award N00014-22-1-2082, Stochastic Parameterization of Ocean Turbulence for Observational Networks. 

\bibliographystyle{alpha}
\bibliography{main.bib}

\end{document}